  \newtheorem{theorem}{Theorem}[section]
  \newtheorem{proposition}[theorem]{Proposition}
  \newtheorem{lemma}[theorem]{Lemma}
  \theoremstyle{definition}
  \newtheorem{definition}[theorem]{Definition}
  \newtheorem{remark}[theorem]{Remark}
  \newcommand{\C}{\mathscr{C}}
  \numberwithin{equation}{section}
  \title{\sc Characterizations of equilibrium allocations in an economy with public goods and infinitely many commodities}
  \author{
{\bf Anuj Bhowmik}\thanks{Indian Statistical Institute, 203 B.T. Road, Kolkata 711108, India. 
 }\\
Indian Statistical Institute\\
\vspace{0.4cm}
e-mail: anujbhowmik09@gmail.com 
}
  \date{\today}
\begin{document}

  \maketitle

\begin{abstract}
This paper examines the characterizations of equilibrium in economies with public projects. Public goods, as discussed by Mas-Colell (1980), are modeled as elements of an abstract set lacking a unified ordering structure. We introduce the concepts of cost share equilibrium for such economies, where the private commodity space is a (possibly nonseparable) Banach lattice. Within this framework, we present two distinct characterizations of cost share equilibria via the veto power of the grand coalition in economies featuring finitely many agents. The first characterization involves allocations that are Aubin non-dominated, while the second establishes that an allocation is a cost share equilibrium if and only if it cannot be dominated by the grand coalition, where domination is considered under specific perturbations of initial endowments.
\end{abstract}

\medskip
\noindent
{\bf JEL Code:} D43, D50, D51, D60. 

\medskip
\noindent
{\bf Keywords:} Cost share equilibrium, $\sigma$-core, Aubin $\sigma$-core, public goods.

\section{Introduction}

This study explores how cost share equilibrium allocations can be decentralized and examines the role of veto power held by the grand coalition in an economic setting consisting of a finite number of agents, an infinite variety of private commodities, and public projects represented abstractly. In this framework, agent's optimal decisions are influenced by the cost they pay for the public project provision which is not linear in the provision levels.

Public goods are modeled as elements of an abstract set without a universally accepted ordering, a feature stemming from the diverse and often conflicting preferences among individuals. Such circumstances are typical for public goods like national defense, education, and public parks, where people have different perspectives, usage patterns, and subjective rankings. The analysis employs the mathematical framework introduced by Mas-Colell (1980), which allows us to capture this complexity.

By interpreting public projects as “environments”—shared variables externally determined—the framework extends naturally to contexts like legal systems, taxation schemes, and publicly provided private goods. Prior contributions, including Diamantaras and Gilles (1996), Gilles and Diamantaras (1997), Giles and Scotchmer (1997), and Hammond and Villar (1998), illustrate the breadth of this approach. Importantly, the lack of a predefined order on the set of projects rules out the use of conventional monotonicity assumptions, such as those underlying Samuelson’s treatment of public goods preferences.

Mas-Colell (1980) made two fundamental contributions. First, he introduced the concept of a valuation function, which assigns non-linear, individualized prices for access to public goods. Standard taxes or subsidies that agents pay can be interpreted as particular forms of such valuation functions. His first key result established that Pareto optimal allocations can be decentralized through valuation equilibria. His second result demonstrated that, in the case where there is only one private good in the economy, valuation equilibria with nonnegative valuation functions are equivalent to the standard Foley core.

Building on this framework, subsequent works—such as Diamantaras and Gilles (1996), Gilles and Diamantaras (1997), and Giles and Scotchmer (1997)—extended Mas-Colell’s model to economies with multiple private goods. In these extensions, the prices of private commodities are permitted to vary with the level of public project provision. This feature distinguishes their approach from the classical Lindahl equilibrium framework, as it allows for interactions between private prices and public project levels. The motivation stems from the observation that the set of public projects may exhibit non-convexities, and adjustments in provision levels can generate significant changes in prices.

A significant portion of the literature has focused on economies with public projects and infinitely many private commodities. The assumption of infinitely many commodities arises because restricting the model to a finite commodity set implies both a fixed termination date for economic activity and a finite set of uncertain states of the world—assumptions that do not hold in many economic contexts. An infinite-dimensional commodity space is therefore more suitable in models involving an infinite time horizon, an unbounded set of possible states of nature, or infinitely many variations in commodity characteristics. De Simone and Graziano (2004), Basile et al. (2005), and Graziano (2007) studied economies with public projects where the commodity space is an infinite-dimensional commodity space.  All these papers establish the two fundamental theorems of welfare economics under various infinite dimensional commodity space. In this paper, we also posit an economy with public projects and an infinitely dimensional private commodity space. In particular, we consider commodity spaces that are Banach lattices whch may not be necessarily seperable. Our commodity space encompasses a wide range of functional spaces such as the sequence spaces $l_p ( 1\le p \le \infty)$, the Lebesgue spaces $L_p ( 1\le p \le \infty)$ , and the measure space
$M(\Omega)$. This approach allows us to accommodate commodity spaces general enough to include virtually all of the function spaces typically employed in economic analysis (see Mas-Colell and Zame () for a detailed study).

In our economy, the blocking mechanism is modeled through a contribution measure, which assigns to each coalition of agents a cost share corresponding to every possible realization of a public project. More precisely, a contribution measure specifies the portion of the cost that a blocking coalition must bear for a given level of public good provision when attempting to block an allocation.\footnote{For comparison, under the mechanism of Foley (1967), this share is uniformly equal to one for all blocking coalitions.} Complementing this, the cost share function determines the fraction of the public project cost borne by each individual agent, thereby shaping the agent’s budget constraint. Diamantaras and Giles (1996) demonstrate that when cost shares are linear in public projects and represent the Radon–Nikodým derivatives of contribution measures, the core coincides with the set of linear cost share equilibria.

In our framework, we adopt the equilibrium concept of Diamantaras and Giles (1996) but relax the assumption that agents face identical cost shares across all public projects. Instead, aligning with the perspective of Basile et al. (2016), our notion of cost share equilibrium varies according to the individual benefits agents derive from the set of public projects. This formulation captures environments where cost-share functions are better understood as voluntary contributions rather than as fixed tax–subsidy schemes.

This paper considers an economy with finitely many agents, infinitely many commodity space, public projects, and makes two main contributions. As a first main contribution, we establish a core equivalence result for finite economies with public projects and infinitely many private commodities. Addressing a finite set of agents and exchange economies devoid of any public goods, Debreu and Scarf (1963) achieve core equivalence by enlarging the set of coalitions by replicating the original economy, thereby increasing the possibilities of blocking an allocation and hence reducing the set of allocations that are not blocked. A second approach in this regard was suggested by Aubin (1979), where agents are allowed to participate in a coalition with a fraction of their endowments essentially leading to the number of coalitions blocking an allocation to be infinitely large. The resulting veto mechanism is commonly referred to in the literature as the fuzzy, or Aubin, veto mechanism. Aubin (1979) showed that the veto mechanism proposed by him is equivalent to the approach of Debreu and Scarf (1963) and eventually leads to a core that is equivalent to the set of competitive equilibria. In this paper, we adopt the notion of blocking introduced by Aubin (1979) and hence extend the standard idea of a contribution scheme from ordinary coalitions to the broader class of Aubin coalitions (see also Graziano and Romaniello (2011), and 
Bhowmik and Graziano (2015)). Focusing on such an approach, we establish our core equivalence by means of an expanded set of coalitions and a modified veto mechanism thereby genralizing the work of Graziano and Romaniello (2011). Under this modified veto mechanism, the contribution assigned to a blocking coalition toward the realization of a public project accounts explicitly for the proportional participation of each member agent. Further, as a second companion result we explore the veto power of the grand coalition by establishing that for a given cost share function and the corresponding contribution measure, cost share equilibria are exactly those allocations that cannot be blocked by a coalition in which each agent participates with a non-zero fraction of his initial endowment. We also provide a third companion result by drawing a connection to Debreu-Scarf wherein we show that given a any cost share function and the corresponding contribution measure,  an allocation is in the Aubin core if and only if it is a $\sigma$-Edgeworth equilibria. Both our companion results generalizes the companion results of Graziano and Romaniello (2011).

As a second main contribution, we establish a charcterization of the cost share equilibrium allocations in terms of veto power of the grand coalition. Herv\'{e}s-Beloso et al. (2005a) provided a first-ever characterization of equilibrium allocations in terms of veto power of the grand coalition in an economy with asymmetric information and a finite-dimensional commodity space. Such a characterization corrsponds to an extension of the Debreu-Scarf (1963) result but also bears a different flavour. In contrast to Debreu-Scarf who enalrge the number of coalitions that agents can form in order to characterize the equilibrium allocations, Herv\'{e}s-Beloso et al. (2005a) provide a characterize the equilibrium allocations by exercising the veto power of the grand coalition, by enlarging the possible redistribution of the initial endowments.\footnote{In a companion paper,  Herv\'{e}s-Beloso et al. (2005b) generalize the findings of  Herv\'{e}s-Beloso et al. (2005a) to the case of an asymmetric economy where the commodity space is chosen to be the bounded sequence space $\ell^{\infty}$ endowed with Mackey topology. Bhowmik and Cao (2013) further generalize the result of  Herv\'{e}s-Beloso et al. (2003) to an asymmetric information economy where the commodity space is a Banach lattice.} In the presence of public projects, Graziano and Romaniello (2011) showed that an allocation in our finite economy is a cost share equilibrium allocation if and only if it is not blocked by the grand coalition in any of the economies obtained by perturbing the original endowment in the direction of the allocation. 
In such economies, the perturbed endowments vary with each alternative provision of the public good and they achieve their result by suitable modifying the cost function of public projects. In this paper, we extend the findings of  Graziano and Romaniello (2011) to an economy with public projects where the commodity space is a Banach lattice which need not be separable. However, we refrain from modifying the cost functions of public projects and suitably modify the perturbed endowments to warantee the result in our case (see Remark \ref{rem:GR} for a 
comparison with  Graziano and Romaniello (2011)). Note that such a characterization also helps us establish that any non cost share equilibrium allocation is blocked by the grand coalition in a perturbed economy which is very close to the original one.  

However, in order to establish our two main contributions we need to establish core equivalence and Vind's theorem for an associated continuum economy to our finite economy which has an equal treatment nature as already noted in the literature. We construct the associated continuum economy by associating with every agent in the finite economy a continuum of individuals who bears the same consumption set, same utility function and the same cost share function as that of the agent in the finite economy. More formally, given an interval $T =[0,1]$ of agents, we denote by $T = \bigcup\{T_i:i\ge 1\}$, where $T_i = \left[\frac{i-1}{n}, \frac{i}{n}\right)$, if $i \neq n$ and $T_n = \left[\frac{n-1}{n}, 1\right]$ where the set of agents in the finite economy is denoted by $N =\{1,2,\cdots,n\}$, and each agent $t \in T_i$ has the consumption set, utility function, and initial endowment as that of agent $i$ in the finite economy. Aumann (1964) provided a first-ever formalization of a continuum economy by building on the idea of Debreu and Scarf (1963) and established the core equialence theorem for such an economy. In the case of economies with public projects, Diamantaras and Giles (1996) established an equivalence between the set of linear cost share equilibrium allocations and the $\sigma$-core of the economy. We extend the work of Diamantaras and Giles (1996) by establishing an equivalence between the cost share equilibria and the $\sigma$-core to the case of an equal treatment economy with public projects where the commodity space is a Banach lattice. This result plays a central role in establishing the equivalence between the Aubin core and cost share equilibrium allocations in our finite economy. It is worth to mention that thsi result has a significance on its own. Tourky and Yannelis (2001) and Podczeck
(2003) constructed counterexamples of economies to show that the classical core-
Walras equivalence theorem in Aumann (1964) fails whenever the commodity space
is a non-separable ordered Banach space. In their construction, they constructed countably infinitely many utility functions to establish their counterexample. In contrast, we are dealing with finitely many different utility functiions following from equal treatment nature of the economy and thus, we obatin a positive result in our case inspite that the commodity space may not be separable.\footnote{See Bhowmik and Cao (2013) for a simialr result in asymmetric information economies.} A further characterization of the core was established by Schmeidler (1972) and Vind (1972) in an exchange economy with only private commodities. In reality, forming coalitions to block non-core allocation requires communication between individuals, which may be quite costly at times. So characterizing core allocations concerning small coalitions has been studied extensively in the literature. Schmeidler (1972) remarked that blocking coalitions outside the equilibrium allocations is sufficient to consider blocking by arbitrary ``small” coalitions only. Vind (1972) further
generalized Schmeidler’s remark to Aumann’s core-equivalence theorem by considering blocking by coalitions of any given arbitrary size lying between zero and that of the grand coalition. Recognize that this further invokes a normative of the blocking mechanism i.e. the core can be characterized as allocations that survive blocking by a majority of agents. In the context of economies  with public projects, Graziano and Romaniello (2011) provided a first-ever characterization of the $\sigma$-core in line with Vind (1972). In this paper, we extend the work of  
Graziano and Romaniello (2011) to an equal treatment continuum economy with public projects and commodity space with Banach lattice. 
This results plays a central role in establishing characterization of cost share equilibrium allocations in terms of blocing power of the grand coalition.


The paper is organized as follows. Section \ref{sec2} presents the economic model with public projects, introducing the notions of cost-share functions and $\sigma$-core. Section \ref{sec:atomless} focuses on equal treatment economies, establishing the core equivalence theorem and characterizing the $\sigma$-core in terms of the size of blocking coalitions, thus generalizing the results of Schmeidler (1972) and Vind (1972) to this framework. Section \ref{sec5} analyzes cost-share equilibria and proves the two central decentralization results. Finally, Section \ref{appendix} contains the proofs of selected results.

\section{Description of Model}\label{sec2}
In this section, we introduce an exchange economy with (non-Samuelsonian) 
public goods. The set of agents is denoted by $N = \{1,2,\cdots,n\}$. Let $\Sigma = \mathcal P(N) \setminus \emptyset$ denote the set of coalitions in the economy.  The private commodity space in our framework is a Banach lattice $\mathbb B$. Let $\mathbb B_+$ denote the positive cone of $\mathbb B$ whereas $\mathbb 
B_{++}$ denote the interior of  $\mathbb B_+$. It is assumed that each economic agent $i \in N$ is endowed with 
quantities of private goods, denoted by $e_i$, such that 
$\sum_{i \in N} e_i \in \mathbb B_{++}$.

\medskip
There is a set $\mathscr Y$ of public projects, on which we do not impose any order or mathematical structure. 
An element $y\in \mathscr Y$, referred to as the {\bf provision level} of the public goods, can be interpreted as the representation of some uniform quantity level of provision of the public goods. The costs for the realization of  
public projects in terms of private goods are expressed by a {\bf cost function} $c:\mathscr Y\to \mathbb B_+$, 
where $c(y)$ denotes the required quantities of private goods to provide the economy with the level $y$  of 
public goods. 

\medskip
To complete the description of an economy, we assume that each agent $i\in N$ has a rational preference 
relation defined on $\mathbb B_+\times \mathscr Y$, which is represented by a utility function $u_i:\mathbb 
B_+\times \mathscr Y\to \mathbb R$. We call the collection $\{u_i:i\in N\}$ of utility functions {\bf desirable} for the economy $\mathscr E$
if it satisfies the following properties: 

\begin{enumerate}
\item[(i)] {\bf Continuity}: The utility function $u_i:\mathbb B_+\times \mathscr Y\to \mathbb R$ is {\bf continuous} 
if $u_i(\cdot, z)$ is continuous for all $z\in \mathscr Y$. 

\item[(ii)] {\bf Monotonicity}:  The utility function $u_i:\mathbb B_+\times \mathscr Y\to \mathbb R$ is {\bf strongly monotone} 
 in the sense that for all $\xi, \zeta\in \mathbb B_{+}$ and all $z\in \mathscr Y$ with $\xi\ge \zeta$ with $\xi\neq \zeta$, we have $u_i(\xi, z)> u_i(\zeta, z)$. 

\item[(iii)] {\bf Quasi-concavity}: For each $i \in N$, the utility function $u_i:\mathbb B_+\times \mathscr Y\to \mathbb R$ is 
{\bf quasi-concave} in the sense that $u_i(\cdot, z)$ is quasi-concave for all $z\in \mathscr Y$.


\end{enumerate}

\begin{definition}\label{defn:economy}
An {\bf economy with (non-Samuelsonian) public goods} is a collection $\mathscr E:=\{N, 
\mathbb B_+, (\mathscr Y, c), (u_i, e_i)_{i\in N}\}$ satisfying the following properties:
\begin{enumerate}
\item[(a)] The collection $\{u_i: i \in N\}$ is desirable; and 

\item[(b)] The total initial endowment ${\bf e}:=\sum_{i \in N} e_i$ satsifies the inequality ${\bf e}-c(y)\in 
\mathbb B_{++}$, for all $y\in \mathscr Y$. This condition ensures that each private commodity is 
present on the market regardless of the cost of the project that is going to be realized. 
\end{enumerate}
\end{definition}

\medskip
\noindent
An {\bf allocation} for the economy $\mathscr E$ is a pair $(x_1,\cdots,x_n, y)$, where $x_i$ specifies the amount of private goods assigned to the agent $i$, and $y\in 
\mathscr Y$ is a public project. Further, an allocation $(x_1, \cdots, x_n, y)$ is said to be {\bf feasible} if 
\[
\sum\limits_{i \in N} x_i + c(y)= \sum\limits_{i \in N} e_i.
\]
This means that the initial endowment is not used to cover the cost of the realized project is 
redistributed among the agents. We denote by $\mathcal A$ the set of all feasible allocations in the economy. Define 
\[
\mathcal A^p:=\{(x_1,\cdots,x_n): (x_1, x_2,\cdots, x_n, y) \in \mathcal A \mbox{ for some } y\in \mathscr Y\}.
\]
Thus, an element of $\mathcal A^p$ is just an allocation of private goods among the agents. We now introduce the notion of a cost distribution function, which describes how much each economic agent must contribute to the establishment of any 
provision level of the public projects. This notion shall come in handy when we define the concept of a cost share equilibrium as introduced in Diamantaras and Giles (1997). We say that a function $\rho:N \times \mathscr Y\to [0, 1]$ is a {\bf cost distribution
function} if the partial function $\rho(\cdot, y)$ is such that $\sum\limits_{i \in N} \rho(i,y) =1$ 
for all $y\in \mathscr Y$. We denote by $\mathscr D$ the set of all 
cost distribution functions for the economy $\mathscr E$. Notice that, similar to Basile et al. (2016), 
we assume that the individual contribution may vary across public projects.


\medskip
We now introduce the notion of a cost share equilibrium. To do this, let $\mathbb 
B^*$ denote the norm-dual of $\mathbb B$, whereas $\mathbb B^*_+$ is the positive cone 
of $\mathbb B^*$. A {\bf price system} is a function $\pi:\mathscr Y\to \mathbb B_+^*$.  Then given a system of prices for private commodities, the notion of cost distribution 
function, describes how much each economic agent must contribute to the establishment of any 
provision level of the public projects. Therefore, if $\rho\in \mathscr D$ is a cost distribution function, 
then agent $i$ is expected to contribute the amount $\rho(i,y)\pi(y)\cdot c(y)$, 
where $\pi(y)$ is the market price vector for private commodities.

\begin{definition}
 A feasible allocation $(x_1,\cdots,x_n, y)$ is a {\bf cost share equilibrium} if there exists a price system 
$\pi:\mathscr Y\to \mathbb B_+^*$ and a cost distribution function $\rho$ such that, for 
all $i \in N$, $(x_i, y)$ maximizes the utility $u_i$ on the budget set 
\[
\mathbb B_i(\pi):=\left\{(\zeta, z)\in \mathbb B_+\times \mathscr Y: \pi(z)\cdot \zeta + \rho(i,z)\pi(z)\cdot c(z)\le \pi(z)\cdot e_i\right\}.
\]
The set of cost share equilibrium of the economy $\mathscr E$ is denoted by ${\rm CE}(\mathscr E)$. 
\end{definition}

Let $\rho\in \mathscr D$ be a cost distribution function. We denote by ${\rm CE}_\rho(\mathscr E)$ the 
collection of all cost share equilibria whose corresponding cost distribution function is equal to $\rho$. 
Therefore, we have the following identity:
\[
{\rm CE}(\mathscr E)=\bigcup\{{\rm CE}_\rho(\mathscr E):\rho\in \mathscr D\}.
\]
It is important to acknowledge that, for any given level of public project provision, the cost distribution among agents in a 
cost share equilibrium may vary; it is not necessarily constant. This gives rise to the concept of an {\bf equal cost share 
equilibrium}: a cost share equilibrium in which the cost distribution function $\rho:N\times \mathscr Y
\to [0, 1]$ is such that 
\[
\rho(i, z)=\frac{c(z)}{n},
\] 
for every agent $i \in N$ and every provision level $z\in \mathscr Y$. A different specification is the {\bf linear cost share equilibrium}, 
introduced by Diamantaras and Gilles (1997), which requires that $\rho(i,z)$ be independent of the provision level
$z$ for all agent $i \in N$.  

\medskip
The utility maximization behavior exhibited by agents in a cost share equilibrium closely mirrors that of standard competitive 
equilibrium models involving only (infinitely many) private goods. However, it diverges significantly from the framework of 
Lindahl equilibrium typically applied in public goods contexts. In a cost share equilibrium, the price system that defines each 
agent’s budget set is contingent upon the entire set of public projects. Although only one public project is ultimately implemented, the price system 
$\pi: \mathscr Y \to \mathbb B^*_+$ can be interpreted as capturing the full spectrum of possible shifts in the private goods 
sector induced by varying public good provisions.

\medskip
This framework notably departs from the classical Samuelsonian model of public goods, primarily due to the absence of monotonic preferences over public projects—since the set 
$\mathscr Y$ lacks an inherent order. Even when $\mathscr Y$ is endowed with a linear structure, decentralizing optimal allocations without a price system that is 
contingent on the public projects may be impossible (see Diamantaras et al. (1996)).

\medskip
In contrast to the standard competitive equilibrium, the concept of a cost share equilibrium is characterized by the inclusion of cost distribution functions. Specifically, the term 
$\rho(i,y)\pi(y)\cdot c(y)$ within agent $i$'s budget constraint represents the individual price they must pay to access the consumption of public project 
$y\in \mathscr Y$. This term mirrors the personalized pricing mechanism found in Lindahl equilibrium models, where individuals pay for public goods according to their marginal benefits. However, unlike in Lindahl equilibrium, these personalized prices in a cost share equilibrium are contingent upon the prices of private goods. Notably, when the set of public projects 
$\mathscr Y$ contains only a single project ($|\mathscr Y|=1$), the notions of cost share equilibrium and linear cost share equilibrium coincide, as the cost distribution becomes uniform across all agents.

\medskip
In what follows, to model the veto mechanism underlying the notion of core, which is most compatible with the notion 
of cost share equilibrium allocations, we assume that each potentially blocking coalition bears a fixed share of the total 
cost of the public project. Crucially, this fixed cost share may differ across projects and need not scale with the 
coalition’s size. The concept of a contribution measure introduced here succinctly encapsulates this flexibility.

\begin{definition}
A {\bf contribution measure} is a function $\sigma:\Sigma\times \mathscr Y\to [0, 1]$ such that 
for each $y\in \mathscr Y$, the partial function $\sigma(\cdot, y): \Sigma \to [0, 1]$ is an additive function such that $\sigma(N, y)=1$. We denote by
$\mathscr M$ the collection of all contribution measures for $\mathscr E$. 
\end{definition}


\begin{definition}
Given an $\sigma\in \mathscr M$, an allocation $(x_1,\cdots,x_n, y)$ is {\bf $\sigma$-blocked by a coalition $S$}
if there exists a public project $z\in \mathscr Y$ and an allocation $(\xi_1,\cdots,\xi_n)$ of private commodities such that 
\begin{enumerate}
\item[(i)] $u_i(\xi_i, y)> u_i(x_i,y)$ for all $i\in S$; and
\item[(ii)] $\sum_{i\in S}\xi_i+ \sigma(S, z) c(z)=\sum_{i\in S} e_i$. 
\end{enumerate}
The {\bf $\sigma$-core} of $\mathscr E$, denoted by ${\C}^{\sigma}(\mathscr E)$, comprises all feasible allocations that 
cannot be $\sigma$-blocked by any coalition. This concept, introduced by Basile et el. (2016), generalizes the classical 
notion of the $\sigma$-core--originally developed by Diamantaras and Gilles (1996) --where the contribution measure is
 invariant with respect to the provision of the public project. In the particular case where the contribution measure
$\sigma(\cdot, y)$ coincides with $\mu$ for any $y\in \mathscr Y$, the corresponding core is referred to 
as the {\bf proportional core} of $\mathscr E$ and is denoted by 
${\C}^\mu(\mathscr E)$. 
\end{definition}

\begin{definition}
A feasible allocation $(x_1,\cdots, x_n, y)$ is said to be {\bf dominated} by another feasible allocation $(\xi_1,\cdots,\xi_n, z)$ if 
$u_i(\xi_i, z)> u_i(x_i,y)$ for every $i \in N$. A feasible allocation is {\bf Pareto optimal} whenever 
it is not dominated by any other feasible allocation.   
\end{definition}
Recognized that the notion of Pareto optimality does not depend on any contribution measure, 
and for any $\sigma\in \mathscr M$, any allocation in the corresponding $\sigma$-core is also a
Pareto optimal allocation.

\medskip
In what follows, we introduce two key concepts: the essentiality condition and the integrability property of preferences. The former, introduced by Diamantaras and Gilles (1997), ensures that private goods are valued in a fundamental way by agents. The latter, as developed in Graziano and Romaniello (2012), imposes a stronger requirement—essentially mandating the interchangeability in the provision of public goods. These two conditions will be implicitly assumed throughout our results, even when not stated explicitly.

\medskip
\noindent
{\bf Essentiality condition:} The economy $\mathscr E$ is said to satisfy the {\bf essentiality condition} if the following two conditions 
hold:

\begin{itemize}
\item[(i)] For every allocation $(x_1,\cdots,x_n,y)$, every provision level $z$ and for each agent 
$i \in N$, there is some $\xi\in \mathbb B_+$ such that $u_i(\xi, z)\ge u_i(x_i,y)$. 

\item[(ii)] For each agent $i \in N$, every $\xi\in \mathbb B_+$ and all provision 
levels $y, z\in \mathscr Y$, we have $u_i(\xi,y)\ge u_i(0,z)$. 
\end{itemize}

The first essentiality condition ensures that any variation in the public goods
provision can be compensated by a suitable quantity of private goods. In the second condition, setting
$u_i(0, z) = 0$ is just a normalization; the important part of this latter statement is that
$u_i(0, y) = u_i(0, z)$ for all projects $y, z\in \mathscr Y$. This is similar to the indispensability
condition of Mas-Colell (1980).

\medskip
\noindent
{\bf  Integrable Utilities:} For any feasible allocation $(x_1,\cdots,x_n, y)$ and all $z\in \mathscr Y$ there is 
an element $(\xi_1,\cdots,\xi_n) \in \mathcal A^p$ such that $(\xi_1,\cdots,\xi_n, z)$ is feasible and $u_i(\xi_i, z)\ge u_i(x_i, y)$
for all $i \in N$, whenever $U(x_i,y):=\{\zeta \in \mathbb B_+:u_i(\zeta, z)> u_i(x_i, y)\}\neq \emptyset$ for all $i \in N$.

\begin{remark}\label{rem:gamma}
Let $(x_1,\cdots,x_n,y)$ be a feasible allocation and $z\in \mathscr Y$. By the essentiality and monotonicity conditions, we have 
\[
U(x_i,y):=\{\zeta \in \mathbb B_+:u_i(\zeta, z)> u_i(x_i, y)\}\neq \emptyset,
\] 
for all $i \in N$. Therefore, as the utities are integrable, we can find an element $(\gamma_1^z,\cdots,\gamma_n^z) \in \mathcal A^p$ such that 
$(\gamma_1^z,\cdots,\gamma_n^z, z)$ is feasible and $u_i(\gamma_i^z, z)\ge u_i(x_i, y)$, for all $i \in N$. Futher, if  $(x_1,\cdots,x_n,y)$ 
is a cost share equilibrium under the price system $\pi:\mathscr Y\to \mathbb B^*_+$ then it can be readily verified that 
\[
\pi(z)\cdot \gamma^z_i+ \rho(i,z)\pi(z)\cdot c(z)= \pi(z)\cdot e_i,
\]
for all $i\in N$.

\end{remark}





\section{An Equal Treatment Continuum Economy}\label{sec:atomless}
In this section, we aim to characterize cost share equilibria in terms of $\sigma$-core allocations of an equal treatment continuum economy, which is derived from a finite economy.
We begin by constructing a continuum economy $\mathscr E_c$ with $n$ different types of agents canonically associated with the finite economy $\mathscr E$, associating each agent 
$i\in N$ with a continuum $T_i$ of individual economic agents. Each agent in $T_i$ shares the same type as agent $i$ in $\mathscr E$ and has a negligible influence on the market. This construction follows standard procedures analogous to those used in economies with only private commodities, as detailed in the literature (see for example Garc\'{i}a-Cutr\'{i}n and Herv\'{e}s-Beloso (1993), Herv\'{e}s-Beloso et al. (2005a,b), Graziano and Romaniello (2012), and Bhowmik and Cao (2013) among others for a similar construction). Moreover, it establishes a correspondence between the cost share functions (and also contribution measures) between the finite and continuum economies.


\medskip
The space of agents is represented by the interval $T = [0,1]$ and is endowed with the Lebesgue $\sigma$-algebra $\mathscr T$ and the Lebesgue measure $\mu$. We denote by $T = \bigcup\{T_i:i\in N\}$, where 
\[
T_i = \left[\frac{i-1}{n}, \frac{i}{n}\right), \mbox{ if } i \neq n \mbox{ and } T_n = \left[\frac{n-1}{n}, 1\right].
\] 
Each agent $t \in T_i$ has a consumption set $\mathbb B_+$, a utility function $u_t = u_i$, and an initial endowment $e(t) = e_i$. Thus, $T_i$ 
denotes the set of agents of type $i$  in the economy $\mathscr E_c$. The set of public projects is denoted by $\mathscr Y$ with the function 
$\widehat{c}: \mathscr Y \to \mathbb B_+$ defined as $\widehat{c}(y) = \frac{c(y)}{n}$, representing the cost function. Notice that, for all $y\in \mathscr Y$,  
\[
\int_T e\, d\mu-\widehat c(y)=\frac{1}{n}\left(\sum_{i\in N}e_i - c(y)\right) \in \mathbb B_{++}.
\]  
An {\bf allocation} for the economy $\mathscr E_c$ is a pair $(f, y)$, where $f(t)$ specifies the amount of private goods assigned to the agent $t$, and $y\in 
\mathscr Y$ is a public project.
We call an allocation $(f,y)$ to be {\bf feasible} in the economy $\mathscr E_c$ if \[
\int_T f\, d\mu + c(y) = \int_T e\, d\mu.
\] 
Further, one can observe that an allocation $(x_1,\cdots,x_n,y)$ in $\mathscr E$ can be identified as an allocation $(f,y)$ in $\mathscr E_c$, where $f$ is the function defined as $f(t) = x_i$ if $t \in T_i$. Reciprocally, an allocation $(f,y)$ in $\mathscr E_c$ can be identified as an allocation $(x_1,\cdots, x_n,y)$ in $\mathscr E$, where $x_i = \frac{1}{\mu(T_i)}\int_{T_i}f \, d\mu$.

\medskip
\begin{definition}
We say that a function $\widehat{\rho}:T \times \mathscr Y\to \mathbb R_+$ is a {\bf cost distribution
function} in the economy $\mathscr E_c$ if the partial function $\rho(\cdot, y)$ is such that $\int_{T} \rho(\cdot,y) d\mu =1$ 
for all $y\in \mathscr Y$. We denote by $\mathscr D_c$ the set of all cost distribution functions in $\mathscr E_c$.
 A feasible allocation $(f, y)$ is a {\bf cost share equilibrium} if there exists a price system 
$\pi:\mathscr Y\to \mathbb B_+^*$ and a cost distribution function $\widehat{\rho}\in \mathscr D_c$ such that, for 
almost all $t\in T$, $(f(t), y)$ maximizes the utility $u_t$ on the budget set 
\[
\mathbb B_t(\pi):=\left\{(\zeta, z)\in \mathbb B_+\times \mathscr Y: \pi(z)\cdot \zeta + \widehat{\rho}(t,z)\pi(z)\cdot c(z)\le \pi(z)\cdot e(t)\right\}.
\]
The set of cost share equilibrium for the economy $\mathscr E$ is denoted by ${\rm CE}(\mathscr E_c)$. 
\end{definition}

\medskip
We now introduce he concept of the $\widehat{\sigma}$-core for the economy $\mathscr E_c$. Before doing so, we begin with the definition of a coalition in 
$\mathscr E_c$. A \textbf{coalition} is defined as a measurable subset $B\subseteq T$ with strictly positive measure.

\begin{definition}
A {\bf contribution measure} in $\mathscr E_c$ is a function $\widehat{\sigma}:\mathscr T\times \mathscr Y\to [0, 1]$ such that 
for each $y\in \mathscr Y$, the partial function $\widehat{\sigma}(\cdot, y):\mathscr T\to [0, 1]$ is a measure 
that is absolutely continuous with respect to $\mu$ such that $\widehat{\sigma}(T, y)=1$. We denote by
$\mathscr M_c$ the collection of all contribution measures in $\mathscr E_c$. 
Given an $\widehat{\sigma}\in \mathscr M_c$, an allocation $(f, y)$ is {\bf $\widehat{\sigma}$-blocked by a coalition $S$}
if there exists a public project $z\in \mathscr Y$ and an allocation $g$ of private commodities such that 
\begin{enumerate}
\item[(i)] $u_t(g(t), z)> u_t(f(t),y)$ $\mu$-a.e. on $S$; and
\item[(ii)] $\int_S g \; d\mu+ \sigma(S, z) c(z)=\int_S e \; d\mu$. 
\end{enumerate}
The {\bf $\widehat{\sigma}$-core} of $\mathscr E_c$, denoted by ${\C}^{\widehat{\sigma}}(\mathscr E_c)$, is the set of all feasible 
allocations that cannot be $\widehat{\sigma}$-blocked by any coalition. 
\end{definition}

\begin{definition}
A feasible allocation $(f, y)$ is said to be {\bf dominated} by another feasible allocation $(g, z)$ if 
$u_t(g(t), z)> u_t(f(t),y)$ $\mu$-a.e. on $T$. A feasible allocation is {\bf Pareto optimal} whenever 
it is not dominated by any other feasible allocation.   
\end{definition}

\begin{remark}

There exists a one-to-one correspondence between \(\mathscr{D}_c\) and \(\mathscr{M}_c\):  

\medskip

(i) For each \(\widehat{\rho} \in \mathscr{D}_c\), there is a unique contribution measure \(\widehat{\sigma}_{\widehat{\rho}} \in \mathscr{M}_c\) associated with \(\widehat{\rho}\), defined by  
\[
\widehat{\sigma}_{\widehat{\rho}}(S, y) := \int_S \widehat{\rho}(\cdot, y)\, d\mu, \quad \text{for every } S \in \mathscr{T}.
\]  

(ii) Conversely, for each \(\widehat{\sigma} \in \mathscr{M}_c\), there exists a Radon–Nikodym derivative \(\widehat{\rho}_{\widehat{\sigma}}(\cdot, y) \in \mathscr{D}_c\) of \(\widehat{\sigma}(\cdot, y)\) with respect to \(\mu\), for all \(y \in \mathscr{Y}\).

\end{remark}

\begin{remark}
If $\rho$ is a cost distribution function in $\mathscr E$, then the function $\widehat{\rho}: T\times \mathscr Y \to \mathbb R_+$, defined as $\widehat{\rho}(t, z) = 
n\rho(i,z)$, represents a cost distribution function for the economy $\mathscr E_c$. Reciprocally, given a cost distribution function $\widehat{\rho}$ in the economy 
$\mathscr E_c$, the function $\rho: N\times \mathscr Y\to [0,1]$ identifies a cost distribution function for the economy $\mathscr E$, where 
$\rho(i,z) = \int_{T_i}\widehat{\rho}(\cdot, z)\, d \mu$ for all $i\in N$ and all $z\in \mathscr Y$.
\end{remark}

\begin{remark}
If $\sigma$ is a contribution measure for $\mathscr E$, then $\widehat{\sigma}: \mathscr T\times \mathscr Y \to [0,1]$, defined as 
\[
\widehat{\sigma}(S,z) = \sum\limits_{i = 1}^{n} \sigma(\left\{i\right\},z)\frac{\mu(S \cap T_i)}{\mu(T_i)},
\]
represents a contribution measure for $\mathscr E_c$. On the other hand, given a contribution measure $\widehat{\sigma}$ for $\mathscr E_c$, the function 
$\sigma: \Sigma \times \mathscr Y\to [0,1]$, defined by $\sigma(S, z)=\sum_{i\in S} \widehat \sigma(T_i, z)$, identifies a contribution measure for $\mathscr E$.
\end{remark}

\subsection{Equivalence Theorem}\label{sec:equi}
In this subsection, we explore the decentralization of the core by characterizing it through the cost share equilibria. It is straightforward 
to verify that, in economies with public projects where the commodity space is a Banach lattice, every cost share equilibrium lies within the 
$\widehat{\sigma}$-core (see Graziano and Romaniello (2012) for a proof in the case of finitely many commodities). Our objective here is 
to establish the converse. Specifically, we show that any allocation in the $\widehat{\sigma}$-core of the economy $\mathscr{E}_c$ that satisfies the 
equal treatment property must be a cost share equilibrium. This result is derived under three alternative assumptions: (i) the positive cone has a nonempty 
interior; (ii) the positive cone lacks an interior but has a quasi-interior point; and (iii) the positive cone has no quasi-interior point.

\begin{theorem}\label{thm:equivalence}
  Let $\mathscr{E}_c$ be an equal treatment economy with public goods, constructed from a finite economy $\mathscr{E}$ such that ${\rm int}\; \mathbb B_+ \neq \emptyset$. Let $\sigma$ be a contribution measure for $\mathscr{E}$, and let $\widehat{\sigma}$ denote the contribution measure induced by $\sigma$ in $\mathscr{E}_c$.  Suppose $(f, y)$ is a feasible allocation in $\mathscr E_c$ such that $f(t) = x_i$ for all $t \in T_i$ and $i \in N$, and $(f, y)$ belongs to the $\widehat{\sigma}$-core of $\mathscr{E}_c$. Then, there exists a non-zero price system $\pi: \mathscr{Y} \to \mathbb{B}^*_+$ such that $(f, y)$ is a cost share equilibrium of $\mathscr{E}_c$.
  \end{theorem}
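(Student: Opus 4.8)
The plan is to fix the public project, run a Hahn--Banach separation at the aggregate level to produce one supporting functional per project, and then disaggregate the resulting aggregate inequality into agent-by-agent utility maximization by exploiting the atomless structure of $T$.

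First I would fix an arbitrary $z\in\mathscr Y$ and form the aggregate set of improving reallocations
\[
G(z):=\Big\{\textstyle\int_T g\,d\mu+c(z)-\int_T e\,d\mu : g\in L_1(T,\mathbb B_+),\ u_t(g(t),z)>u_t(f(t),y)\ \mu\text{-a.e.}\Big\}.
\]
By Remark \ref{rem:gamma} the strict upper-contour sets are nonempty a.e.\ (using essentiality and monotonicity), so $G(z)\neq\emptyset$. The set $G(z)$ is convex: since each $u_t(\cdot,z)$ is quasi-concave its strict upper-contour set is convex, hence the correspondence $t\mapsto\{\xi:u_t(\xi,z)>u_t(f(t),y)\}$ is convex-valued and the Bochner integral of a convex-valued correspondence is convex. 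The decisive geometric fact is that $G(z)\cap(-\mathbb B_+)=\emptyset$: if some improving $g$ satisfied $\int_T g+c(z)-\int_T e=-w$ with $w\in\mathbb B_+$, then setting $g'(t)=g(t)+w$ and invoking strong monotonicity (so $u_t(g'(t),z)>u_t(f(t),y)$) would, since $\int_T g'+c(z)=\int_T e$ and $\widehat\sigma(T,z)=1$, make $(g',z)$ a reallocation by which the grand coalition $T$ $\widehat\sigma$-blocks $(f,y)$, contradicting $(f,y)\in\C^{\widehat\sigma}(\mathscr E_c)$.

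Since $\mathrm{int}\,\mathbb B_+=\mathbb B_{++}\neq\emptyset$, the convex set $-\mathbb B_+$ has nonempty interior and is disjoint from the convex set $G(z)$, so the separation theorem yields a nonzero $\pi(z)\in\mathbb B^*$ separating them; as $-\mathbb B_+$ is a cone, $\pi(z)\in\mathbb B^*_+\setminus\{0\}$ and one obtains, writing $\widehat\rho(\cdot,z)$ for the Radon--Nikodym derivative of $\widehat\sigma(\cdot,z)$ (so $\int_T\widehat\rho(\cdot,z)\,d\mu=1$),
\[
\int_T\big[\pi(z)\cdot g(t)+\widehat\rho(t,z)\,\pi(z)\cdot c(z)-\pi(z)\cdot e(t)\big]\,d\mu\ \ge\ 0
\]
for every improving selection $g$. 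Taking $z=y$ and letting improving bundles decrease to $f$ gives $0\in\overline{G(y)}$, and combined with feasibility of $(f,y)$ this forces the aggregate budget to bind at $y$; together with strong monotonicity (so $f(t)+v_0$ with $v_0\in\mathbb B_{++}$ is improving at $y$) this will yield budget exhaustion $\pi(y)\cdot f(t)+\widehat\rho(t,y)\pi(y)\cdot c(y)=\pi(y)\cdot e(t)$ for a.e.\ $t$, and in particular affordability of $(f(t),y)$.

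The step I expect to be the main obstacle is passing from this aggregate inequality to \emph{individual} optimality: that for a.e.\ $t$ and every $z$, any $\xi$ with $u_t(\xi,z)>u_t(f(t),y)$ obeys $\pi(z)\cdot\xi+\widehat\rho(t,z)\pi(z)\cdot c(z)\ge\pi(z)\cdot e(t)$. I would argue by contradiction: if on a positive-measure set $A$ agents could strictly improve at $z$ with a strictly \emph{cheaper} bundle, a measurable-selection argument produces an improving $g$ on $A$ of sub-budget cost, which I would then inflate in the direction $v_0\in\mathbb B_{++}$ (preserving strict preference by monotonicity) until it exhausts each agent's wealth. The genuine difficulty is that $\widehat\sigma$-blocking requires the \emph{vector} balance $\int_A g+\widehat\sigma(A,z)c(z)=\int_A e$, not merely balance in value, and in a non-separable Banach lattice Lyapunov's theorem on the range of a vector measure is unavailable. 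I would overcome this on the atomless space $(T,\mathscr T,\mu)$: convexity of Bochner integrals of correspondences, the freedom to split $A$ into subsets of prescribed measure, and monotone free disposal of the interior direction $v_0$ together let the value surplus be converted into an exactly balanced, still-improving reallocation for a subcoalition of $A$, contradicting $(f,y)\in\C^{\widehat\sigma}(\mathscr E_c)$. Finally I would assemble $\{\pi(z):z\in\mathscr Y\}$ into the price system: since each budget constraint at a project $z$ involves only $\pi(z)$, project-by-project individual optimality together with affordability and budget exhaustion at $z=y$ shows $(f(t),y)$ maximizes $u_t$ over $\mathbb B_t(\pi)$ for a.e.\ $t$, so $(f,y)\in\mathrm{CE}(\mathscr E_c)$ with a nonzero positive $\pi$.
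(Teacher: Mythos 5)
Your architecture diverges from the paper's at the one point that actually carries the theorem, and the divergence opens a genuine gap. You separate only the \emph{grand-coalition} improving set $G(z)$ from $-\mathbb B_+$. That separation yields the single aggregate inequality $\int_T \pi(z)\cdot g\,d\mu \ge \pi(z)\cdot \int_T e\,d\mu - \pi(z)\cdot c(z)$ for improving selections $g$ on all of $T$. Combined with the feasible weakly-improving allocation $(\gamma_1^z,\dots,\gamma_n^z,z)$ of Remark \ref{rem:gamma}, this does pin down that every $\zeta$ with $u_i(\zeta,z)>u_i(x_i,y)$ satisfies $\pi(z)\cdot\zeta\ge \pi(z)\cdot\gamma_i^z$ — but that is a second-welfare-theorem conclusion: optimality relative to the \emph{income} $\pi(z)\cdot\gamma_i^z+\rho(i,z)\pi(z)\cdot c(z)$. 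The theorem requires optimality on the budget set $\mathbb B_t(\pi)$, i.e.\ the individual identities $\pi(z)\cdot\gamma_i^z+\rho(i,z)\pi(z)\cdot c(z)=\pi(z)\cdot e_i$ (the paper's Step II). Grand-coalition blocking only delivers the \emph{sum} of these identities over $i$; the individual ones require blocking by proper subcoalitions, which your separated set does not encode. Your proposed repair — extract a positive-measure set $A$ of value-deficit agents and build an exact $\widehat\sigma$-blocking coalition from it — founders on precisely the obstacle you name: after inflating along $v_0$ to exhaust each budget in value, the residual $w=\int_A g'+\widehat\sigma(A,z)c(z)-\int_A e$ satisfies only $\pi(z)\cdot w=0$ and has no sign, so neither convexity of the integral, nor splitting $A$ into pieces of prescribed measure, nor disposal along $v_0$ (which changes value) converts it to the exact vector balance that the definition of $\widehat\sigma$-blocking demands. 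In an infinite-dimensional lattice there is no Lyapunov-type device to close this, and the sketch does not supply one.

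The paper avoids the problem by changing what is separated: it sets ${\bf F}(t,z)=\bigl({\bf G}(t,z)-e(t)\bigr)\cup\{-\widehat\rho(t,z)c(z)\}$ and separates ${\rm cl}\int_T{\bf F}(\cdot,z)\,d\mu+c(z)$ from $-\mathbb B_{++}$. The singleton ``opt-out'' branch lets agents outside a would-be blocking set contribute exactly their cost share, so a selection improving only on $R\subseteq T$ integrates to $\int_R(\varphi-e)\,d\mu+\widehat\sigma(R,z)c(z)$; the single separating functional therefore yields the budget inequality for \emph{every} subcoalition at once (this is how the paper's Steps I, II and Claim 4 run, via the functions $\hat h$ that opt out off $T_j$). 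The price of the union is that ${\bf F}(t,z)$ is no longer convex-valued, which is why the paper takes the closure of the Aumann integral (convex over an atomless space) rather than relying, as you do, on pointwise convexity of selections. Separately, you only ever derive the weak inequality $\pi(z)\cdot\xi+\widehat\rho(t,z)\pi(z)\cdot c(z)\ge\pi(z)\cdot e(t)$ for preferred $\xi$; utility maximization on $\mathbb B_t(\pi)$ needs the strict inequality, which requires the cheaper-point argument (the paper's Step I, resting on ${\bf e}-c(z)\in\mathbb B_{++}$ to produce an agent with $\rho(i_0,z)\pi(z)\cdot c(z)<\pi(z)\cdot e_{i_0}$, and Step III for $\pi(z)\in\mathbb B^*_{++}$); this step is absent from your sketch.
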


\begin{proof}
Suppose that $\widehat{\rho}$ (resp. $\rho$) is the cost distribution function in $\mathscr{E}_c$ (resp. $\mathscr E$) associated 
with the contribution measure $\widehat{\sigma}$ (resp. $\sigma$). Therefore, $\widehat{\rho}(t, z)=\rho(i,z)$ for all $t\in T_i$ with $i\in N$ and all 
$z\in \mathscr Y$. For any $z\in \mathscr Y$ and any $t\in T$, we define 
\[
{\bf G} (t, z):= \left\{g\in \mathbb B_{+}: u_t(g, z)> u_t(f(t), y)\right\}
\]
and 
\[
{\bf F} (t, z):= \left({\bf G} (t, z)-e(t)\right)\cup \left\{-\widehat{\rho}(t,z)c(z)\right\}.
\]
Finally, for any $z\in \mathscr Y$, we define
\[
{\bf F}(z):= {\rm cl}\int_T {\bf F}(\cdot, z) d\mu+ c(z).
\]

\noindent
{\bf Claim 1.} \emph{For any $z\in \mathscr Y$, we have ${\bf F}(z)\cap (-\mathbb B_{++})=\emptyset$.} 
Assume that the claim is false. Then, as $\mathbb B_{++}$ is open, there exist an $z\in \mathscr Y$
 and an integrable selection $h$ of ${\bf F}(\cdot, z)$ such that 
\[
\int_T h\, d\mu+c(z)\in -\mathbb B_{++}.
\]
Define 
\[
R:=\left\{t\in T: h(t)\neq -\widehat{\rho}(t,z)c(z)\right\}.
\] 
It follows that $\mu(R)> 0$ and 
\[
\int_R h\, d\mu+\widehat{\sigma}(R, z)c(z)\in -\mathbb B_{++}.
\]
Furthermore, there is an $\varphi(t)\in {\bf G} (t, z)$ such that 
$h(t):=\varphi(t)-e(t)$ for all $t\in R$. Let $\xi\in \mathbb B_{++}$ be such that 
\[
\int_R (\varphi-e)d\mu+\sigma(R, z)c(z)=-\xi.
\] 
Define a function $\psi: T \to \mathbb B_+$ by letting $\psi(t):=\varphi(t)+\frac{\xi}{\mu(R)}$, for all $t\in R$. By 
the strong monotonicity of preferences with respect to private commodities, we conclude that 
$(f,y)$ is $\widehat{\sigma}$-blocked by $R$ via $(\psi, z)$. This is a contradiction.

\medskip
\noindent
{\bf Claim 2:} $\pi(z)  \in \mathbb B^*_{++}$. By the separating hyperplane theorem, there is a price system 
$\pi:\mathscr Y\to \mathbb B^*$ such that $\pi(z)\cdot \psi\ge 0$ for all $\psi\in {\bf F}(z)$ and all $z\in \mathscr Y$. 
It can be readily verified that $\pi(z)\in \mathbb B^*_+$ for all $z\in \mathscr Y$.
We know that  for any $z \in \mathscr Y$, $c(z) \ll \sum_{i \in N}e_i$. Thus, it follows that $\pi(z) \cdot c(z) < \pi(z) \cdot \sum_{i \in N} e_i$ which can be written as 
\[
\sum_{i \in N}\rho(i,z) \pi(z)\cdot c(z) < \pi(z) \cdot \sum_{i \in N} e_i.
\] 
Then there must exist some $i_0 \in N$ such that $\rho(i_0, z) \pi(z)\cdot c(z) < \pi(z) \cdot e_{i_0}$.

\medskip
\noindent
{\bf Step I:}  Choose an element $(g_{i_0},z)$ such that $u_{i_0}(g_{i_0}, z) > u_{i_0}(x_{i_0},y)$. We show that $\pi(z) \cdot g_{i_0} + \rho(i_0, z) \pi(z)\cdot c(z) > \pi(z) \cdot e_{i_0}$. To this end, we assume that $\pi(z) \cdot g_{i_0} + \rho(i_0, z) \pi(z)\cdot c(z) \le \pi(z) \cdot e_{i_0}$ holds. Since 
$\rho(i_0, z) \pi(z)\cdot c(z) < \pi(z)\cdot e_{i_0}$ it follows that $\pi(z) \cdot g_{i_0} > 0$. Consequently, there is some $\alpha \in (0,1)$ such that 
$u_{i_0}(\alpha g_{i_0}, z) > u_{i_0}(x_{i_0}, y)$ and 
\[
\pi(z) \cdot \alpha g_{i_0} + \rho(i_0, z) \pi(z)\cdot c(z) < \pi(z) \cdot e_{i_0}.
\]
 Define an allocation of private goods $\hat h: T \to \mathbb B_+$ as
\[
\hat{h}(t): = \left\{
\begin{array}{ll}
\hspace{3pt} \alpha g_{i_0}-e_{i_0}, & \mbox{ if $t \in T_{i_0}$; }\\
-\rho(i, z)c(z), \hspace{2pt} & \mbox{if $t \in T_i$, $i \neq i_0$.}
\end{array}
\right.
\]
We then obtain that
\[
\int_{T} \hat h\, d\mu + c(z)  = (\alpha g_{i_0} - e_{i_0})\mu(T_{i_0}) + \rho(i_0, z)c(z)\mu(T_{i_0}).
\]
Therefore, 
\[
\pi(z) \cdot \left[\int_{T} \hat h\, d\mu + c(z)\right]  =\pi(z) \cdot (\alpha g_{i_0} - e_{i_0})\mu(T_{i_0}) + \rho(i_0, z)\mu(T_{i_0}) \pi(z) \cdot c(z)< 0.
\]
This is a contradiction. Thus, $\pi(z) \cdot g_{i_0} + \rho(i_0, z) \pi(z) \cdot c(z) > \pi(z) \cdot e_{i_0}$. 

\medskip
\noindent
{\bf Step II:} Fix an element $z\in \mathscr Y$. By Remark \ref{rem:gamma}, there exists a feasible allocation $(\gamma_1^z,\cdots, \gamma_n^z, z)$ 
such that $u_i(\gamma_i^z,z) \ge u_i(x_i,y)$ for all $i\in N$. In what follows, we verify that 
\[
\pi(z) \cdot \gamma_i^z + \rho(i,z) \pi(z) \cdot c(z) = \pi(z) \cdot e_i
\]
for all $i\in N$. To this end, for each $\varepsilon > 0$, define $\delta_i^\varepsilon: = \gamma_i^z + \varepsilon$.  By the monotonicity of preferences, we have 
$u_i(\delta_i^\varepsilon, z)> u_i(\gamma_i^z,z) \ge u_i(x_i,y)$ for all $i\in N$. We show that 
\[
\pi(z) \cdot \delta_i^\varepsilon + \rho(i,z) \pi(z) \cdot c(z) \ge \pi(z) \cdot e_i
\]
for all $i\in N$. Suppose not. Then there exists some $j\in N$ such that 
\[
\pi(z) \cdot \delta_j^\varepsilon + \rho(j,z) \pi(z) \cdot c(z) < \pi(z) \cdot e_j.
\]
As in {\bf Step I}, the function $\hat h:T\to \mathbb B_+$, defined by 
\[
\hat{h}(t): = \left\{
\begin{array}{ll}
\delta_i^\varepsilon-e_j, & \mbox{ if $t \in T_{j}$; }\\
-\rho(i,z)c(z), & \mbox{if $t \in T_i$, $i \neq j$,}
\end{array}
\right.
\]
satisfies 
\[
\pi(z) \cdot \left[\int_{T} \hat h\, d\mu + c(z)\right] < 0.
\]
This leads to a contradiction. Consequently, we have
\[
\pi(z) \cdot \delta_i^\varepsilon + \rho(i,z) \pi(z) \cdot c(z) \ge \pi(z) \cdot e_i
\]
for all $i\in N$. This implies that 
\[
\pi(z) \cdot \gamma_i^z + \rho(i,z) \pi(z) \cdot c(z) \ge \pi(z) \cdot e_i
\]
for all $i\in N$. Using the feasibility of $(\gamma_1^z,\cdots, \gamma_n^z, z)$, we conclude that
\[
\pi(z) \cdot \gamma_i^z + \rho(i,z) \pi(z) \cdot c(z) = \pi(z) \cdot e_i
\]
for all $i\in N$.

\medskip
\noindent
{\bf Step III:} Suppose that $\pi(z) \notin \mathbb B^*_{++}$. Then there exists some private good $\eta \in \mathbb B_+ \setminus \left\{0\right\}$ such that $\pi(z) \cdot \eta = 0$. 
Define $\zeta: = \gamma_{i_0}^z + \eta$. It then follows that $u_{i_0}(\zeta, z) > u_{i_0}(x_{i_0}, y)$ and $\pi(z) \cdot \zeta + \rho(i_0, z) \pi(z) \cdot c(z) = \pi(z) \cdot e_{i_0}$. This contradicts {\bf Step I}. Therefore, we conclude that $\pi(z)\in 
\mathbb B^*_{++}$.

\medskip
\noindent
{\bf Claim 3.} \emph{$(f(t), y)$ satisfies budget feasibility for all $t\in T$}. By {\bf Step II} of {\bf Claim 2}, as $\gamma_i^y=x_i$ for all $i\in N$, 
we note that $(f(t), y)$ is budget feasible for all $t\in T$.  

\medskip
\noindent
{\bf Claim 4.} \emph{$(f(t), y)$ is optimizing under the price system $\pi$ for all $t\in T$}. To this end, pick some $j\in N$, and choose 
some $(\zeta, z)\in \mathbb B_+\times \mathscr Y$ such that $u_j(\zeta, z) > u_j(x_j, y)$. We must show that
\[
\pi(z) \cdot \zeta + \rho(j,z)\pi(z) \cdot c(z) > \pi(z) \cdot e_j.
\] 
Suppose not, then $\pi(z) \cdot \zeta + \rho(j,z)\pi(z) \cdot c(z) \le \pi(z) \cdot e_j$. By the essentiality condition, it follows that $\zeta\neq 0$. Hence, there is some $\beta \in (0,1)$ such that $u_j(\beta\zeta, z) > u_j(x_j, z)$ and 
\[
\pi(z) \cdot \beta \zeta + \rho(j,z) \pi(z) \cdot c(z) < \pi(z) \cdot e_j.
\] 
Then, again as in {\bf Step I}, the function $\hat h:T\to \mathbb B_+$, defined by 
\[
\hat{h}(t): = \left\{
\begin{array}{ll}
\beta\zeta-e_j, & \mbox{ if $t \in T_{j}$; }\\
-\rho(i,z)c(z), & \mbox{if $t \in T_i$, $i \neq j$,}
\end{array}
\right.
\]
satisfies 
\[
\pi(z) \cdot \left[\int_{T} \hat h\, d\mu + c(z)\right] < 0.
\]
 This is a contradiction. Thus, our claim is verified.  This completes the proof of the theorem.
\end{proof}

Next, we extend Theorem \ref{thm:equivalence} to an economy with public goods and the equal treatment property whose
  commodity space is a Banach lattice whose positive cone has no interior points but possesses a quasi-interior point. For each 
$i\in N$ and each $(x_i, y)\in \mathbb B_+\times \mathscr Y$, let 
\[
P_i(x_i, y):=\{(\xi_i,z)\in \mathbb B_+\times \mathscr Y: u_i(\xi_i,z)> u_i(x_i,y)\}
\]
  be the set of all consumption bundles preferred to $(x_i,y)$ by agent $i$. Then 
$P_i: \mathbb B_+\times \mathscr Y\rightrightarrows \mathbb B_+\times \mathscr Y$ 
is called the \emph{preference relation} of agent $i$. For each $i\in N$, $(x_i, y)
  \in \mathbb B_+\times \mathscr Y$ and $z\in \mathscr Y$, we define 
$P_i^z: \mathbb B_+\times \mathscr Y\rightrightarrows \mathbb B_+$ by letting 
\[
P_i^z(x_i,y):=\{\xi_i\in \mathbb B_+:(\xi_i,z)\in P_i(x_i,y)\}.
\]
Note that $P_i^z(x_i,y)$
  is convex and relatively $\|\cdot\|$-open in $\mathbb B_+$ for
  all $i\in N$. Therefore, 
\[
P_i(x_i,y)= \bigcup\left\{P_i^z(x_i,y)\times \{z\}:z\in \mathscr Y\right\}.
\]
The following definition of ATY-properness is a modification of that in
  Podczeck and Yannelis (2008).

  \begin{definition} \label{defi:proper}
  The relation $P_i: \mathbb B_+\times \mathscr Y\rightrightarrows \mathbb B_+\times
 \mathscr Y$ is called \emph{ATY-proper} at $(x_i,y)\in \mathbb B_+\times \mathscr Y$ if 
for each $z\in \mathscr Y$ there exists a convex subset $\widetilde{P_i^z}
  (x_i,y)$ of $\mathbb B$ with non-empty $\|\cdot\|$-interior
  such that $\widetilde{P_i^z}(x_i,y)\cap \mathbb B_+= P_i^z(x_i,y)$ and
  ($\|\cdot\|$-int$\widetilde{P_i^z}(x_i,y)$)$\cap \mathbb B_+\neq
  \emptyset$.
  \end{definition}

  {\bf Properness:}. If $(x_1,\cdots,x_n, y)$ is a Pareto optimal allocation in $\mathscr{E}$, 
then for each $i\in N$, $P_i$ is ATY-proper at $(x_i,y)$.

  \begin{lemma}\label{lem:contfunc}
  Let $Y$ be a real vector space endowed with a Hausdorff, locally
  convex topology $\tau$ and let $U, V$ be convex subsets of
  $Y$ such that $U$ is open and $U\cap V\neq \emptyset$. Let $y\in
  V\cap {\rm cl} U$, where ${\rm cl}U$ denotes the closure of $U$.
  Suppose that $\pi$ is a linear functional (not necessarily
  continuous) on $Y$ with $\langle\pi, y\rangle\leq \langle\pi,
  y^\prime \rangle$ for all $y^\prime\in U\cap V$. Then, there exist
  linear functionals $\pi_1$ and $\pi_2$ on $Y$ such that $\pi_1$
  is continuous, $\langle \pi_1, y\rangle\leq \langle\pi_1, u
  \rangle$ for all $u\in U$, $\langle \pi_2, y\rangle\leq \langle
  \pi_2, v\rangle$ for all $v\in V$ and $\pi= \pi_1+ \pi_2$.
  \end{lemma}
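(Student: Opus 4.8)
The plan is to reduce to a support problem at the origin and then split $\pi$ by a single separation argument in $Y \times \mathbb{R}$. First I would translate by $y$: set $A := U - y$ and $B := V - y$, so that $A$ is open and convex with $0 \in {\rm cl}\, A$ (because $y \in {\rm cl}\, U$), while $B$ is convex with $0 \in B$ (because $y \in V$); moreover $A \cap B = (U \cap V) - y \neq \emptyset$, and the hypothesis on $\pi$ becomes $\langle \pi, w\rangle \ge 0$ for all $w \in A \cap B$. In these coordinates the goal is to produce a continuous linear $\pi_1$ with $\langle \pi_1, a\rangle \ge 0$ for all $a \in A$ and such that $\pi_2 := \pi - \pi_1$ satisfies $\langle \pi_2, b\rangle \ge 0$ for all $b \in B$; translating back recovers the two support inequalities at $y$. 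I would also record the guiding fact that a linear functional bounded below on a nonempty open set is automatically continuous, so once $\pi_1 \ge 0$ on the open set $A$ its continuity is not an extra burden.

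The core step is to separate two convex sets in $Y \times \mathbb{R}$. Define the open convex set $\Theta := A \times (0, \infty)$ and the convex set $\Xi := \{(b, s) : b \in B,\ s \le -\langle \pi, b\rangle\}$, where convexity of $\Xi$ uses the linearity of $\pi$. These are disjoint: a common point would give $w \in A \cap B$ with $0 < s \le -\langle \pi, w\rangle$, contradicting $\langle \pi, w\rangle \ge 0$. Since $\Theta$ is open and nonempty and $\Xi$ is convex and nonempty, the geometric Hahn--Banach theorem (valid in any topological vector space when one of the sets is open) yields a nonzero continuous linear functional $(\ell, \alpha)$ on $Y \times \mathbb{R}$ and a constant $\gamma$ with $\langle \ell, b\rangle + \alpha s \le \gamma \le \langle \ell, a\rangle + \alpha t$ for all $(b,s) \in \Xi$ and all $(a,t) \in \Theta$.

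From here I would read off the structure. Letting $t \downarrow 0$ and $t \uparrow \infty$ forces $\alpha \ge 0$ and $\langle \ell, a\rangle \ge \gamma$ for all $a \in A$; taking the supremum over $s \le -\langle \pi, b\rangle$ gives $\langle \ell, b\rangle - \alpha \langle \pi, b\rangle \le \gamma$ for all $b \in B$. Evaluating the first inequality in the limit $a \to 0$ (using $0 \in {\rm cl}\, A$ and continuity of $\ell$) gives $\gamma \le 0$, while $b = 0 \in B$ gives $\gamma \ge 0$; hence $\gamma = 0$. Then $\pi_1 := \ell/\alpha$ and $\pi_2 := \pi - \pi_1$ do the job: $\pi_1$ is continuous, $\langle \pi_1, a\rangle \ge 0$ on $A$, and $\langle \ell, b\rangle \le \alpha \langle \pi, b\rangle$ rearranges to $\langle \pi_2, b\rangle \ge 0$ on $B$.

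The main obstacle is the nondegeneracy $\alpha > 0$, which is exactly where the openness of $U$ and the nonemptiness of $U \cap V$ must be used. If $\alpha = 0$ then $\ell \neq 0$, with $\langle \ell, \cdot\rangle \ge 0$ on $A$ and $\le 0$ on $B$; choosing $w_0 \in A \cap B$ forces $\langle \ell, w_0\rangle = 0$, and since $w_0$ is interior to $A$ there is a balanced neighborhood $W$ of $0$ with $w_0 + W \subseteq A$, whence $\langle \ell, w\rangle \ge 0$ and, by balancedness, $\langle \ell, -w\rangle \ge 0$ for $w \in W$, i.e. $\ell$ vanishes on an absorbing set and thus $\ell = 0$, a contradiction. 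I expect verifying this degeneracy exclusion, together with checking that the product-space separation is admissible in the given (merely locally convex) setting, to be the only genuinely delicate points; the remaining manipulations are routine.
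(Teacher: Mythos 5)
Your proof is correct. Note that the paper itself does not supply an argument for this lemma: it simply cites Lemma 2 of Podczeck (1996), so there is no in-text proof to compare against line by line. Your self-contained argument --- translating to the origin, separating the open convex set $A\times(0,\infty)$ from the convex ``hypograph-type'' set $\Xi=\{(b,s):b\in B,\ s\le -\langle\pi,b\rangle\}$ in $Y\times\mathbb{R}$, and then reading off $\pi_1=\ell/\alpha$ --- is the standard route to this decomposition in the properness literature and is essentially the argument underlying Podczeck's lemma. All the delicate points are handled: disjointness of $\Theta$ and $\Xi$ follows from the hypothesis $\langle\pi,w\rangle\ge 0$ on $A\cap B$; the sign $\alpha\ge 0$ and the inequalities $\langle\ell,a\rangle\ge\gamma$ and $\langle\ell,b\rangle-\alpha\langle\pi,b\rangle\le\gamma$ are extracted correctly; $\gamma=0$ uses exactly the two hypotheses $y\in{\rm cl}\,U$ (via continuity of $\ell$) and $y\in V$; and the exclusion of $\alpha=0$ via a balanced neighbourhood of a point of $A\cap B$ is precisely where openness of $U$ and $U\cap V\neq\emptyset$ enter. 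The only cosmetic remark is that the continuity of $\pi_1$ already comes for free from the geometric Hahn--Banach theorem (the separating functional on $Y\times\mathbb{R}$ is continuous and restricts to a continuous $\ell$ on $Y$), so the preliminary observation that a linear functional bounded below on an open set is continuous, while true, is not actually needed.
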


  \begin{lemma}\label{lem:dense}
  Let $Y$ be a Riesz space endowed with a Hausdorff, locally
  convex topology $\tau$. Suppose that $L(z)$ denotes the 
principal ideal generated by $z$ in $Y$. If $L(z)$ is $\tau$-dense in $Y$,
  then $L(z)_+$ is $\tau$-dense in $Y_+$.
  \end{lemma}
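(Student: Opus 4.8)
The plan is to prove the equivalent inclusion $Y_+ \subseteq {\rm cl}\, L(z)_+$, where $L(z)_+ := L(z)\cap Y_+$ denotes the positive cone of the ideal. First I would reduce to the case $z\ge 0$: since the principal ideal generated by $z$ coincides with the one generated by $|z|$, and $|z|\ge 0$, there is no loss of generality in assuming $z\ge 0$. In that case $z$ is an order unit of $L(z)$, so that $L(z)=\{x\in Y:|x|\le \lambda z \text{ for some } \lambda\ge 0\}$ and its positive cone is $L(z)_+=\{x\in Y_+:x\le \lambda z \text{ for some } \lambda\ge 0\}$.

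Next, fix $y\in Y_+$. By the $\tau$-density of $L(z)$ in $Y$, choose a net $(w_\alpha)$ in $L(z)$ with $w_\alpha\xrightarrow{\tau} y$. The central observation is that the positive part can be taken without leaving $L(z)_+$: since $L(z)$ is an ideal it is in particular a sublattice, so $w_\alpha^+ = w_\alpha\vee 0\in L(z)$, and clearly $w_\alpha^+\ge 0$, whence $w_\alpha^+\in L(z)_+$. It therefore suffices to prove that $w_\alpha^+\xrightarrow{\tau} y$. To this end I would invoke Birkhoff's inequality $|a^+-b^+|\le |a-b|$; applying it with $b=y$ and using $y=y^+$ (as $y\ge 0$) gives
\[
|w_\alpha^+ - y| = |w_\alpha^+ - y^+| \le |w_\alpha - y|.
\]

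The last step --- and the place where the lattice structure must be tied to the topology --- is to pass from $w_\alpha - y\to 0$ to $w_\alpha^+ - y\to 0$ through the domination above. I would carry this out using the continuity of the lattice operations (equivalently, the local solidity of $\tau$): since $w_\alpha - y\to 0$ one has $|w_\alpha - y|\to 0$, and because $0\le |w_\alpha^+ - y|\le |w_\alpha - y|$, solidity of the $\tau$-neighbourhoods of the origin forces $w_\alpha^+ - y\to 0$. Hence $y\in {\rm cl}\, L(z)_+$, and as $y\in Y_+$ was arbitrary, $L(z)_+$ is $\tau$-dense in $Y_+$. I expect this final passage to be the main obstacle, since it is exactly here that the interplay between $\tau$ and $|\cdot|$ is needed; an alternative route that isolates the same difficulty is the dual formulation: by the bipolar theorem it is enough to show that every continuous functional $\phi\in Y'$ which is nonnegative on $L(z)_+$ is nonnegative on all of $Y_+$, and this again reduces, via the net $(w_\alpha^+)$, to the continuity/positivity interplay described above.
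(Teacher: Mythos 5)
Your argument is correct up to its final step, and that final step is precisely where it breaks down relative to the lemma as stated. The reduction to $z\ge 0$, the observation that $w_\alpha^+\in L(z)_+$ (indeed $0\le w_\alpha^+\le |w_\alpha|$, so the solidity of the ideal already gives this without even invoking the sublattice property), and the Birkhoff inequality $|w_\alpha^+-y^+|\le |w_\alpha-y|$ are all fine. But the hypothesis of the lemma is only that $\tau$ is a Hausdorff \emph{locally convex} topology on the Riesz space $Y$; it is not assumed to be locally solid, and the lattice operations of a Riesz space need not be continuous for an arbitrary locally convex topology (for instance, $x\mapsto x^+$ fails to be weakly continuous on infinite-dimensional Banach lattices). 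Hence the passage from $w_\alpha-y\to 0$ to $|w_\alpha-y|\to 0$, and the subsequent squeeze from $0\le |w_\alpha^+-y|\le |w_\alpha-y|$ to $w_\alpha^+-y\to 0$, both presuppose a base of solid neighbourhoods of the origin that the hypotheses do not provide. You flag this yourself as ``the main obstacle,'' but the proposal resolves it only by assuming local solidity, so as a proof of the stated lemma it is incomplete; the dual reformulation you sketch inherits exactly the same difficulty, since showing that a $\tau$-continuous functional which is nonnegative on $L(z)_+$ must be nonnegative on $Y_+$ again requires approximating a positive element by positive elements of the ideal.

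Two mitigating remarks. First, the paper itself offers no proof to compare against: the lemma is quoted from Podczeck (1996, Lemma 3), whose argument is designed to exploit convexity of neighbourhoods rather than solidity. Second, everywhere the lemma is actually used here (in the proofs of Theorems \ref{thm:core-wal2} and \ref{thm:core-wal3}), $\tau$ is the norm topology of a Banach lattice, or the $\|\cdot\|_{\bf e}$-topology of an AM-space, both of which \emph{are} locally convex-solid; under that additional hypothesis your argument is complete and is essentially the standard one. To prove the lemma in its stated generality you would need either to add local solidity to the hypotheses explicitly or to replace the last step by an argument that uses only convexity of the sets $L(z)$, $L(z)_+$ and $Y_+$ (as in the cited lemma of Podczeck), not the continuity of $x\mapsto x^+$.
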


  \begin{lemma}\label{lem:rieszdec}
  Let $Y$ be a Riesz space and let $Z$ be an ideal in $Y$. Let
  $y_1,..., y_m$ be elements of $Y$ and $z_1, ..., z_m$ be
  elements of $Z$ such that $\sum_{i= 1}^m z_i\leq \sum_{i= 1}^m
  y_i$. Suppose that there exists an element $z\in Z$ such that
  $z\leq y_i$ for each $i= 1, ..., m$. Then, there are elements
  $\hat{z}_1,\cdots, \hat{z}_m$ of $Z$ such that $\sum_{i= 1}^m
  \hat{z}_i= \sum_{i= 1}^m z_i$ and $\hat{z}_i\leq y_i$ for
  each $i= 1,\cdots, m$.
  \end{lemma}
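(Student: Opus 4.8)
The plan is to observe first that the individual $z_i$ enter the conclusion only through their sum, so it suffices to prove the following reduced statement: given $s\in Z$, elements $y_1,\dots,y_m\in Y$ with $s\le\sum_{i=1}^m y_i$, and a floor $z\in Z$ with $z\le y_i$ for every $i$, there exist $\hat z_1,\dots,\hat z_m\in Z$ with $\hat z_i\le y_i$ and $\sum_{i=1}^m\hat z_i=s$. Applying this with $s:=\sum_{i=1}^m z_i$ (which lies in $Z$ because $Z$ is a linear subspace) yields the lemma, since the requirement $\sum_i\hat z_i=\sum_i z_i$ depends only on the sum. I would prove the reduced statement by induction on $m$, the case $m=1$ being immediate, as $\hat z_1:=s\le y_1$ works and $s\in Z$.

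For the inductive step I would peel off the last coordinate. Set $L:=s-\sum_{i=1}^{m-1}y_i$; the hypothesis $s\le\sum_{i=1}^m y_i$ gives $L\le y_m$, so the order interval $[L,y_m]$ is nonempty, and any $\hat z_m$ lying in it leaves a remainder $s-\hat z_m\le\sum_{i=1}^{m-1}y_i$ to which the induction hypothesis over $y_1,\dots,y_{m-1}$, with the \emph{same} floor $z$, can be applied. The candidate I would use is $\hat z_m:=y_m\wedge(s\vee L)$, which manifestly satisfies $L\le\hat z_m\le y_m$, the lower inequality because both $y_m$ and $s\vee L$ dominate $L$.

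The main obstacle — and the sole reason the floor $z$ appears in the hypotheses — is showing $\hat z_m\in Z$, since the defining expression involves $y_m$ and $L$, neither of which need lie in $Z$. I would resolve this by sandwiching $\hat z_m$ between two elements of $Z$ and invoking solidity of the ideal. For the lower bound, $s\vee L\ge s$ gives $\hat z_m\ge y_m\wedge s$; and $z\le y_m$ gives $z\wedge s\le y_m\wedge s\le s$ with $z\wedge s,\,s\in Z$, so that $y_m\wedge s\in Z$. For the upper bound, $z\le y_i$ for $i\le m-1$ gives $L\le s-(m-1)z=:L'\in Z$, whence $\hat z_m\le s\vee L\le s\vee L'\in Z$. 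Thus $y_m\wedge s\le\hat z_m\le s\vee L'$ with both endpoints in $Z$.

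The one elementary fact I would record carefully is that if $a\le b\le c$ with $a,c\in Z$, then $b\in Z$: indeed $0\le b-a\le c-a$ with $c-a\in Z_+$, so solidity yields $b-a\in Z$ and hence $b\in Z$. This gives $\hat z_m\in Z$. Finally I would apply the induction hypothesis to $s-\hat z_m\in Z$ over $y_1,\dots,y_{m-1}$ with floor $z$ to obtain $\hat z_1,\dots,\hat z_{m-1}\in Z$ with $\hat z_i\le y_i$ and $\sum_{i=1}^{m-1}\hat z_i=s-\hat z_m$, and concatenate to conclude $\sum_{i=1}^m\hat z_i=s$ with every $\hat z_i\le y_i$ and $\hat z_i\in Z$.
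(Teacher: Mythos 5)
Your proof is correct. Note that the paper itself does not prove this lemma: it is stated without proof and attributed to Lemma~7 of Podczeck and Yannelis (2008), so there is no in-paper argument to compare against. Your induction — peeling off the last coordinate via $\hat z_m := y_m \wedge (s \vee L)$ with $L = s - \sum_{i=1}^{m-1} y_i$, and using the floor $z$ to sandwich $\hat z_m$ between $y_m \wedge s$ and $s \vee (s-(m-1)z)$, both of which lie in $Z$ because ideals are sublattices and are order-convex — is a valid, self-contained proof in the same spirit as the cited source, and it correctly isolates the only place the hypothesis $z \le y_i$ is needed, namely to force $\hat z_m$ into the ideal.
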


  For proofs of Lemmas \ref{lem:contfunc}, \ref{lem:dense} and
  \ref{lem:rieszdec}, see Lemmas 2 and 3 in Podczeck (1996)
  and Lemma 7 in Podczeck and Yannelis (2008), respectively.
  In the proof of the next theorem,the method employed to establish the 
continuity of the equilibrium price closely follows the argument presented in Theorem 2 of Podczeck and Yannelis (2008).

  \begin{theorem}\label{thm:core-wal2}
  Let $\mathscr{E}_c$ be an equal treatment economy with public goods, constructed from a finite economy $\mathscr{E}$ such that $\mathbb B_+$
lacks an interior but has a quasi-interior point. Let $\sigma$ be a contribution measure for $\mathscr{E}_c$, and let $\widehat{\sigma}$ denote the corresponding contribution measure induced by $\sigma$ in $\mathscr{E}_c$.  Suppose $(f, y)$ is a feasible allocation in $\mathscr E_c$ such that $f(t) = x_i$ for all $t \in T_i$ and $i \in N$, and $(f, y)$ belongs to the $\widehat{\sigma}$-core of $\mathscr{E}_c$. Under the properness condition, there exists a non-zero price system $\pi: \mathscr{Y} \to \mathbb{B}^*_+$ such that $(f, y)$ is a cost share equilibrium of $\mathscr{E}_c$.
  \end{theorem}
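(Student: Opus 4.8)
The plan is to preserve the skeleton of the proof of Theorem~\ref{thm:equivalence} — construct, for each $z\in\mathscr Y$, an aggregate set $\mathbf F(z)$ built from the strictly preferred sets $\mathbf G(t,z)=P_t^z(f(t),y)$ and the non-participation values $-\widehat\rho(t,z)c(z)$, separate it from the negative cone to produce a price $\pi(z)$, and then recover budget feasibility and optimality through the three-step argument of Claim~2. The decisive difference is that $\mathbb B_+$ now has empty interior, so the separating hyperplane theorem is unavailable and neither the positivity nor the \emph{$\tau$-continuity} of $\pi(z)$ can be read off for free. The properness hypothesis is exactly the device that manufactures the missing interior: for each $t\in T$ and each $z\in\mathscr Y$ I would pass from $\mathbf G(t,z)$ to its ATY-proper extension $\widetilde{P_t^z}(f(t),y)$, a convex set with nonempty $\|\cdot\|$-interior satisfying $\widetilde{P_t^z}(f(t),y)\cap\mathbb B_+=\mathbf G(t,z)$ and $(\|\cdot\|\text{-int}\,\widetilde{P_t^z}(f(t),y))\cap\mathbb B_+\neq\emptyset$. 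Replacing $\mathbf G(\cdot,z)$ by its extension in the definition of $\mathbf F(\cdot,z)$ produces an aggregate whose closure carries nonempty interior, which is what any separation argument will rest on.

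With the extended sets in hand I would first reprove the analogue of Claim~1: the $\widehat\sigma$-core property of $(f,y)$ forces the extended aggregate to avoid the relevant part of the negative cone, the translation being identical to the interior case — a violating selection, once intersected with $\mathbb B_+$ (where the extension agrees with the genuine preferred set $\mathbf G$) and restricted to the coalition $R$ on which it differs from $-\widehat\rho(t,z)c(z)$, would yield a $\widehat\sigma$-block of $(f,y)$ by $R$. Separation then delivers a \emph{linear} functional supporting the preferred aggregate at the current allocation; because the extension only carries interior in $\mathbb B$ while the genuine constraints live in $\mathbb B_+$, this functional need not be $\tau$-continuous on all of $\mathbb B$. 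This is where Lemma~\ref{lem:contfunc} enters, exactly as in Theorem~2 of Podczeck and Yannelis (2008): taking $U$ to be the open convex extension and $V=\mathbb B_+$, it splits the supporting functional as $\pi=\pi_1+\pi_2$ with $\pi_1$ $\tau$-continuous and supporting $U$, and $\pi_2$ supporting $\mathbb B_+$ (hence positive). The continuous part $\pi_1=:\pi(z)$ is the candidate equilibrium price; it still supports $\mathbf G(t,z)$ because the extension and the true preferred set coincide on $\mathbb B_+$, and monotonicity of $u_t$ then gives $\pi(z)\in\mathbb B^*_+$.

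It remains to upgrade $\pi(z)$ to a \emph{strictly} positive, continuous price and to run the budget and optimality steps. For strict positivity I would invoke the quasi-interior point $z^*$: by Lemma~\ref{lem:dense} the positive cone $L(z^*)_+$ of the principal ideal is $\|\cdot\|$-dense in $\mathbb B_+$, so if $\pi(z)\cdot z^*$ were zero then $\pi(z)$ would annihilate a dense subset of $\mathbb B_+$ and hence vanish, contradicting that it supports the preferred aggregate; this replaces Step~III of Claim~2. Lemma~\ref{lem:rieszdec} is then used to disaggregate: when passing from an inequality on a total bundle back to the individual budget identities $\pi(z)\cdot\gamma_i^z+\rho(i,z)\pi(z)\cdot c(z)=\pi(z)\cdot e_i$, the Riesz decomposition inside the ideal lets one allocate the aggregate surplus across the finitely many types while respecting the required upper bounds, reproducing Steps~I--II. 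With a continuous, strictly positive $\pi(z)$ for every $z\in\mathscr Y$ in hand, the verifications of budget feasibility (Claim~3) and of optimality (Claim~4) transcribe almost verbatim from the proof of Theorem~\ref{thm:equivalence}. The main obstacle throughout is the single italicized point: extracting a genuinely $\tau$-continuous supporting price from the merely linear functional produced by separation, and confirming that this continuous component $\pi_1$ still supports every agent's true preferred set — everything else is a routine adaptation of the interior-case argument.
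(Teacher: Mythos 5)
Your architecture diverges from the paper's at the very first step, and the divergence opens a genuine gap. You propose to run the separation argument of Theorem~\ref{thm:equivalence} directly in $\mathbb B$, replacing each preferred set $\mathbf G(t,z)$ by its ATY-proper extension $\widetilde{P_t^z}(f(t),y)$ so that the aggregate acquires interior. The problem is that the $\widehat\sigma$-core hypothesis only constrains blocking plans whose bundles lie in $\mathbb B_+$, whereas a selection of the extended correspondence takes values in $\widetilde{P_t^z}(f(t),y)-e(t)\subseteq\mathbb B-e(t)$ and need not be positive-valued. Your phrase ``once intersected with $\mathbb B_+$ (where the extension agrees with the genuine preferred set)'' does not repair this: a single violating selection either is positive or it is not, and if it is not, the core property says nothing about it, so the analogue of Claim~1 --- disjointness of the extended aggregate from the target set --- does not ``translate identically.'' Moreover, with ${\rm int}\,\mathbb B_+=\emptyset$ there is no open set $-\mathbb B_{++}$ to separate from in the first place, so even the statement of Claim~1 has to be reformulated, and it is precisely the reformulation that is nontrivial. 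A secondary slip: your quasi-interior-point argument shows only that $\pi(z)\cdot z^*>0$, i.e.\ that $\pi(z)\neq 0$; it does not yield $\pi(z)\in\mathbb B^*_{++}$, and Lemma~\ref{lem:rieszdec} does not by itself ``allocate the aggregate surplus'' to recover the individual budget identities of Step~II --- in the paper that role is played by the Riesz--Kantorovich formula applied to $\sup_i\pi_i^1(z)$.

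The paper sidesteps all of this by changing the commodity space rather than the preferred sets. It passes to the principal ideal $\mathbb C=L({\bf e})$ generated by the total endowment, which under the order-unit norm $\|\cdot\|_{\bf e}$ is an AM-space in which ${\bf e}$ \emph{is} an interior point of $\mathbb C_+$; every feasible bundle lives in $\mathbb C_+$, so Theorem~\ref{thm:equivalence} applies verbatim to the restricted economy and produces an equilibrium price $\widetilde\pi(z)\in(\mathbb C,\|\cdot\|_{\bf e})^*$. Properness and Lemma~\ref{lem:contfunc} are then used only \emph{after} the equilibrium exists, to show that $\widetilde\pi(z)$ is in fact $\|\cdot\|$-continuous: for each type $i$ one decomposes $\widetilde\pi(z)=\pi_i^1(z)+\pi_i^2(z)$ with $\pi_i^1(z)$ norm-continuous, forms $\widehat\pi(z)=\sup_i\pi_i^1(z)$ in the ideal $(\mathbb C,\|\cdot\|)^*$, and uses the Riesz--Kantorovich formula together with the budget identities for $(\gamma_1^z,\dots,\gamma_n^z,z)$ to force $\widehat\pi(z)=\widetilde\pi(z)$ on $\mathbb C$. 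Hahn--Banach extension and the $\|\cdot\|$-density of $\mathbb C_+$ in $\mathbb B_+$ (Lemma~\ref{lem:dense}, via the quasi-interior point) then carry the equilibrium property to all of $\mathbb B_+$. You correctly anticipated the roles of Lemma~\ref{lem:contfunc} and of density, but placing the properness machinery at the separation stage rather than at the price-continuity stage is what creates the unfixed hole; to make your route rigorous you would essentially have to reprove the Podczeck--Yannelis restriction argument anyway.
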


  \begin{proof}
 Define $\mathbb C= L(\bf e)$ to be the principal ideal generated by ${\bf e}$ in $\mathbb B$, where 
${\bf e}=\sum_{i\in N}e_i$. Therefore, $(\mathbb C, \|\cdot\|_{\bf e})$ is an $AM$-space with $\bf e$
  as an order unit. Furthermore, 
 \begin{itemize}
\item[(i)] ${\bf e} \in \|\cdot\|_{\bf e}$-int$\mathbb C_+$;
\item[(ii)] $\mathbb C_+$ is $\|\cdot\|_{\bf e}$-closed in $\mathbb C$; and 
\item[(iii)] the $\|\cdot\|_{\bf e}$-closed unit ball of $\mathbb C$ coincides with
  the order interval $[-\bf{e}, \bf{e}]$.
\end{itemize}
  Define a new economy $\widetilde{\mathscr{E}}$ which is identical with $\mathscr{E}$
  except for the commodity space being $\mathbb C$ equipped with the
  $\|\cdot\|_{\bf{e}}$-topology, and
  agent $i$'s utility being $u_i(0,z)|_{\mathbb C}$ for each $z\in \mathscr Y$.
  If $(\xi_1,\cdots, \xi_n, z)$ is a feasible allocation of $\mathscr{E}$,
  then $\xi_i\in \mathbb C_+$ for each $i\in N$, and therefore, we have $x_i \in \mathbb C_+$ for each $i\in N$. 
  Since $(\mathbb C, \|\cdot\|_{\bf{e}})$ is a Banach lattice, the
  $\|\cdot\|$-topology is weaker than the
  $\|\cdot\|_{\bf{e}}$-topology on $\mathbb C$. Thus, 
  $u_i(\cdot, z)|_{\mathbb C}$ is $\|\cdot\|_{\bf{e}}$-continuous for each $z\in \mathscr Y$.
  Consequently, we have verified that $\widetilde{\mathscr{E}}$ satisfies
  hypothesis of Theorem \ref{thm:equivalence}, and $f$ is in the $\widehat{\sigma}$-core of
  $\widetilde{\mathscr{E}}_c$. By Theorem \ref{thm:equivalence}, there
  is a non-zero positive function $\widetilde{\pi}:\mathscr Y\to (\mathbb C, \|\cdot
  \|_{\bf{e}})^\ast$ such that $(f, y)$ is a cost share equilibrium of
  $\widetilde{\mathscr{E}}_c$ under the price system $ \widetilde{\pi}$. In what follows, we show that there is a non-zero
  positive function $\pi:\mathscr Y\to (\mathbb C, \|\cdot\|)^\ast$ such that $(f,y)$
 is a cost share equilibrium of $\mathscr{E}_c|_{\mathbb C}$ under the price system $\pi$, where $\mathscr{E}_c|_{\mathbb C}$ 
is identical with $\widetilde{\mathscr{E}}_c$ except for the commodity space being $\mathbb C$
  with the norm $\|\cdot\|$.

\medskip
  Since $(f,y)$ is a cost share equilibrium of $\widetilde{\mathscr{E}}_c$ with $f(t)=x_i$ for all $t\in T_i$ and $i\in N$, we 
have that $(x_1, \cdots,x_n, y)$ is a cost share equilibrium of
  $\widetilde{\mathscr{E}}$. Thus, $(x_1, \cdots,x_n, y)$ is a 
  Pareto optimal allocation in $\widetilde{\mathscr{E}}$, and also in
  $\mathscr{E}$. Let $i\in N$ and $z\in \mathscr Y$. By properness condition and Definition
  \ref{defi:proper}, there is a convex and
  $\|\cdot\|$-open subset $V_i$ of $\mathbb B$ such that
  \[
  \emptyset\neq V_i\cap \mathbb B_+ \subseteq P_i^z(x_i,y) \mbox{ and }
  \|\cdot\|-{\rm cl} P_i^z(x_i,y)\subseteq
  \|\cdot\|-{\rm cl} V_i.
 \] 
Since ${\bf e}$ is a quasi-interior point of $\mathbb B_+$,
  $\mathbb C$ is $\|\cdot\|$-dense in $\mathbb B$. By Lemma \ref{lem:dense},
  $\mathbb C_+$ is $\|\cdot\|$-dense in $\mathbb B_+$. Thus, $V_i\cap\mathbb C_+\neq 
\emptyset$. Define $W_i= V_i\cap \mathbb C$. Consequently, $W_i$ is convex and relatively
  $\|\cdot\|$-open in $\mathbb C$. Furthermore,  
\[
\emptyset\neq W_i\cap \mathbb C_+\subseteq P_i^z(x_i,y) \mbox{ and }
 \|\cdot\|_{\mathbb C}-{\rm cl}\widehat{P_i^z}(x_i,y)\subseteq \|\cdot
  \|_{\mathbb C}-{\rm cl}W_i,
\]
 where $\widehat{P_i^z}(x_i,y)= P_i^z(x_i,y)\cap \mathbb C_+$. By Remark \ref{rem:gamma}, 
there exists a feasible allocation $(\gamma_1^z, \cdots, \gamma_n^z, z)$ such that for all $i \in N$,
\begin{itemize}
\item[(iv)] $\gamma_i^z\in \|\cdot\|_{\mathbb C}$-${\rm cl}\widehat{P_i^z}(x_i,y)$; and 
\item[(v)] $\widetilde{\pi}(z)\cdot \gamma^z_i+ \rho(i,z)\widetilde{\pi}(z)\cdot c(z)= \widetilde{\pi}(z)\cdot e_i$.
\end{itemize}
 For any $\xi_i\in W_i\cap \mathbb C_+$, since $\xi_i\in P_i^z(x_i,y)$ and $(x_1, \cdots,x_n, y)$ is a cost share equilibrium 
under the price system $\widetilde{\pi}$, by (v), we have 
\begin{eqnarray}\label{eq:1}
\widetilde{\pi}(z)\cdot \xi_i+ \rho(i, z)\widetilde{\pi}(z)\cdot c(z)> \widetilde{\pi}(z)\cdot \gamma^z_i+ \rho(i, z)\widetilde{\pi}(z)\cdot c(z).
 \end{eqnarray}
From (iv), it follows that $\gamma_i^z\in \|\cdot\|_{\mathbb C}$-${\rm cl}W_i$. Define $\widehat{W_i}=W_i+\{\rho(i, z)c(z)\}$ and note that 
$\|\cdot\|_{\mathbb C}$-${\rm cl}{\widehat W_i}=\|\cdot\|_{\mathbb C}$-${\rm cl}W_i+\{\rho(i, z)c(z)\}$.Therefore, 
\begin{eqnarray}\label{eq:2}
\gamma_i^z+\rho(i, z)c(z)\in \mathbb C_+\cap \|\cdot\|_{\mathbb C}-{\rm cl}\widehat{W_i},
\end{eqnarray}
 By Lemma \ref{lem:contfunc}, Equations (\ref{eq:1}) and (\ref{eq:2}), there exist a $\pi_i^1(z)\in (\mathbb C,
  \|\cdot\|)^\ast$ and a linear functional $\pi_i^2(z)$ on
  $(\mathbb C, \|\cdot\|)$ such that 
\begin{itemize}
\item[(vi)] $\pi_i^1 (z)\cdot (\xi_i+ \rho(i, z)c(z))\ge \pi_i^1(z)\cdot (\gamma^z_i+\rho(i, z)c(z))$ for all
  $\xi_i\in W_i$; \footnote{Note that $\xi_i\in W_i$ if and only if $\xi_i+\rho(i, z)c(z)\in \widehat{W_i}$.}
\item[(vii)] $\pi_i^2 (z)\cdot (\xi_i+\rho(i, z)c(z))\ge \pi_i^2(z)\cdot (\gamma^z_i+\rho(i, z)c(z))$ for all
  $\xi_i\in \mathbb C_+$; \footnote{Note that $\xi_i\in \mathbb C_+$ if and only if $\xi_i+\rho(i, z)c(z)\in \mathbb C_+$.} and  
\item[(viii)] $\widetilde{\pi}(z)= \pi_i^1(z)+\pi_i^2(z)$ for all $z\in \mathscr Y$. 
\end{itemize}
Since $\mathbb C_+$ is a cone, we have $\pi_i^2(z)\cdot (\gamma^z_i+\rho(i, z)c(z))= 0$. It follows that  $\pi_i^2 (z)\cdot (\xi_i+ 
\rho(i, z)c(z))\ge 0$ for all
  $\xi_i\in \mathbb C_+$, which can equivalently be written as $\pi_i^2(z)\cdot \zeta_i\ge 0$ for all $\zeta_i\in 
  \mathbb C_+$. Consequently, we have
  \begin{itemize}
  \item[(ix)] $\widetilde{\pi}(z)\cdot (\gamma^z_i+\rho(i, z)c(z))= \pi_i^1(z)\cdot (\gamma^z_i+\rho(i, z)c(z))$; and 
  \item[(x)] $\widetilde{\pi}(z)\cdot \zeta_i\ge
  \pi_i^1(z)\cdot \zeta_i$ for all $\zeta_i\in \mathbb C_+$.
  \end{itemize}
By (x), we conclude that $\widetilde{\pi}(z)\ge \pi_i^1(z)$ for all $z\in \mathscr Y$. 
  Since $(\mathbb C,\|\cdot\|)$ is a locally solid Riesz space, $(\mathbb C,
  \|\cdot\|)^\ast$ is an ideal in $(\mathbb C, \|\cdot\|_{\bf e})^\ast$.
Hence, we can choose an element $\widehat{\pi}:\mathscr Y\to 
  (\mathbb C, \|\cdot\|)^\ast$ such that $\widehat{\pi} (z)=
  \sup\{\pi^1_i(z): i\in N\}$ for all $z\in \mathscr Y$. Therefore, $\widehat{\pi}(z)
  \in (\mathbb C, \|\cdot\|)^\ast$ and $\widehat{\pi}(z)\le
  \widetilde{\pi}(z)$ for all $z\in \mathscr Y$. By the Riesz-Kantorovich formulas and (ix), we
  obtain
  \begin{eqnarray*}
  \widehat{\pi}(z)\cdot {\bf e}
  &=& \sup\left\{\sum_{i\in N}\pi^1_i(z)\cdot \zeta_i: \zeta_i\in \mathbb C_+, 
\sum_{i\in N}\zeta_i= {\bf e}\right\} \nonumber\\
  &\geq& \sum_{i\in N}\pi^1_i(z)\cdot (\gamma_i^z+\rho(i, z)c(z))\noindent \\
  &=& \sum_{i\in N}\widetilde{\pi}(z) \cdot (\gamma_i^z+\rho(i, z)c(z)) \noindent \\
 &=& \widetilde{\pi}(z)\cdot \sum_{i\in N}(\gamma_i^z+\rho(i, z)c(z)) \noindent \\
  &=&  \widetilde{\pi}(z)\cdot {\bf e}.
  \end{eqnarray*}
   Choose an element $\xi\in \mathbb C_+$. 
Since $\mathbb C=L(\bf e)$, there is some $\delta> 0$ such
  that $\xi\leq \delta {\bf e}$. This along with the fact that $\widehat{\pi}(z)\leq \widetilde{\pi}(z)$
imply that 
\[
\widehat{\pi}(z)\cdot (\delta{\bf e}-\xi)\leq \widetilde{\pi}(z)\cdot (\delta{\bf e}-\xi).
\]
Consequently, we have $\widehat{\pi}(z)\cdot \xi\geq \widetilde{\pi}(z)\cdot \xi$. 
Hence, $\widehat{\pi}(z)\geq \widetilde{\pi}(z)$ and therefore, $\widehat{\pi}(z)= \widetilde{\pi}(z)$
for all $z\in \mathscr Y$. Thus, $\widetilde{\pi}(z)\in (\mathbb C, \|\cdot\|)^\ast$ for 
all $z\in \mathscr Y$ and $(f, y)$ is a cost share equilibrium
of $\mathscr{E}_c|_{\mathbb C}$. By the Hahn-Banach
  theorem, we can choose a positive element $\ddot{\pi}(z)\in (\mathbb B, \|\cdot
  \|)^\ast$ for each $z\in \mathscr Y$ such that $\ddot{\pi}(z)$ is an extension of $\widetilde{\pi}(z)$.
  Since $\mathbb C_+$ is $\|\cdot\|$-dense in $\mathbb B_+$
  and $u_i(\cdot,z)$ is $\|\cdot\|$-continuous for each $i\in N$,
  we deduce that 
\begin{eqnarray}\label{eqn:cost}
\ddot{\pi}(z)\cdot \xi+ \rho(i, z)\ddot{\pi}(z)\cdot c(z)\ge \ddot{\pi}(z)\cdot e_i
\end{eqnarray}
for all $\xi\in \mathbb B_+$ satisfying $u_i(\xi,z)> u_i(x_i, y)$. Analogous to {\bf Claim 4} in the proof of Theorem \ref{thm:equivalence}, 
one can show that $(f,y)$ is a cost share equilibrium under the price 
system $\ddot{\pi}:\mathscr Y\to \mathbb B^*$.   
  \end{proof}

  Now, we extend Theorem \ref{thm:core-wal2} to an economy with the equal 
treatment property whose commodity space is a Banach lattice having no quasi-interior
  point in its positive cone. The following definition of strong
  ATY-properness and the argument to get continuity of the
  equilibrium price in the next theorem are similar to those
  in (A8$^\prime$) and Theorem 3 of Podczeck and Yannelis (2008).

  \begin{definition} \label{defi:strongproper}
 The relation $P_i: \mathbb B_+\times \mathscr Y\rightrightarrows \mathbb B_+\times
 \mathscr Y$ is called \emph{strongly ATY-proper} at $(x_i,y)\in \mathbb B_+\times \mathscr Y$ if 
for each $z\in \mathscr Y$ there exists a convex subset $\widetilde{P_i^z}
  (x_i,y)$ of $\mathbb B$ with non-empty $\|\cdot\|$-interior
  such that 
\[
\widetilde{P_i^z}(x_i,y)\cap \mathbb B_+= P_i^z(x_i,y) \mbox{ and }
  (\|\cdot\|-{\rm int} \widetilde{P_i^z}(x_i,y))\cap \mathbb B_+\cap L\left(\sum_{i\in N} e_i\right)\neq
  \emptyset,
\]
where $L(\sum_{i\in N}e_i)$ to be the principal ideal generated by $\sum_{i\in N}e_i$ in $\mathbb B$.
  \end{definition}

 {\bf Strong properness:} If $(x_1, \cdots,x_n, y)$ is a Pareto optimal allocation of $\mathscr{E}$, 
then for each $i\in N$, $P_i$ is strongly ATY-proper at $(x_i,y)$.

  \begin{theorem}\label{thm:core-wal3}
Let $\mathscr{E}_c$ be an equal treatment economy with public goods, constructed from a finite economy $\mathscr{E}$ such that $\mathbb B_+$
has no quasi-interior point. Let $\sigma$ be a contribution measure for $\mathscr{E}$, and let $\widehat{\sigma}$ denote the corresponding contribution measure induced by $\sigma$ in 
$\mathscr{E}_c$.  Suppose $(f, y)$ is a feasible allocation in $\mathscr E_c$ such that $f(t) = x_i$ for all $t \in T_i$ and $i \in N$, and $(f, y)$ belongs to the $\widehat{\sigma}$-core of $\mathscr{E}_c$. Under the strong properness condition, there exists a non-zero price system $\pi: \mathscr{Y} \to \mathbb{B}^*_+$ such that $(f, y)$ is a cost share equilibrium of $\mathscr{E}_c$.
 \end{theorem}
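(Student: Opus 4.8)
The plan is to follow the template of the proof of Theorem~\ref{thm:core-wal2}, replacing every appeal to the density of $\mathbb C_+$ in $\mathbb B_+$ (which is no longer available) by the interior point supplied by strong ATY-properness. First I would reduce to the principal ideal $\mathbb C=L({\bf e})$, an $AM$-space with order unit ${\bf e}=\sum_{i\in N}e_i$, so that ${\bf e}\in \|\cdot\|_{\bf e}$-${\rm int}\,\mathbb C_+$. Since every feasible allocation satisfies $0\le \xi_i\le {\bf e}$, it has all components in $\mathbb C_+$; hence $(f,y)$ remains feasible and in the $\widehat{\sigma}$-core of the restricted equal-treatment economy $\widetilde{\mathscr E}_c$ whose commodity space is $(\mathbb C,\|\cdot\|_{\bf e})$ and whose utilities are the restrictions $u_i(\cdot,z)|_{\mathbb C}$ (these are $\|\cdot\|_{\bf e}$-continuous because $\|\cdot\|_{\bf e}$ is finer than $\|\cdot\|$ on $\mathbb C$). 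Theorem~\ref{thm:equivalence} then yields a non-zero positive price $\widetilde{\pi}:\mathscr Y\to(\mathbb C,\|\cdot\|_{\bf e})^{*}$ under which $(f,y)$ is a cost share equilibrium of $\widetilde{\mathscr E}_c$. This opening is essentially verbatim the beginning of the proof of Theorem~\ref{thm:core-wal2}.

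The technical heart, and the main obstacle, is to upgrade each $\widetilde{\pi}(z)$ to a functional that is continuous for the original norm $\|\cdot\|$. In Theorem~\ref{thm:core-wal2} this step used Lemma~\ref{lem:dense} to pass openness and the existence of a positive point from $\mathbb B$ to $\mathbb C$; precisely this density fails here. Strong ATY-properness is tailored to repair it: for each $i$ and $z$ it provides a convex set $\widetilde{P_i^z}(x_i,y)$ with non-empty $\|\cdot\|$-interior in $\mathbb B$ whose interior already meets $\mathbb B_+\cap L({\bf e})=\mathbb B_+\cap\mathbb C$. Consequently the sets $W_i:=(\|\cdot\|\text{-}{\rm int}\,\widetilde{P_i^z}(x_i,y))\cap\mathbb C$ are convex, relatively $\|\cdot\|$-open in $\mathbb C$, and meet $\mathbb C_+$ with no recourse to density. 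I would then run the decomposition of Theorem~\ref{thm:core-wal2} unchanged: Lemma~\ref{lem:contfunc} splits $\widetilde{\pi}(z)=\pi_i^1(z)+\pi_i^2(z)$ with $\pi_i^1(z)\in(\mathbb C,\|\cdot\|)^{*}$ and $\pi_i^2(z)$ positive and vanishing at $\gamma_i^z+\rho(i,z)c(z)$, whence $\pi_i^1(z)\le\widetilde{\pi}(z)$; forming $\widehat{\pi}(z)=\sup_{i\in N}\pi_i^1(z)$ inside the ideal $(\mathbb C,\|\cdot\|)^{*}$ of $(\mathbb C,\|\cdot\|_{\bf e})^{*}$ and applying the Riesz--Kantorovich formula together with $\sum_{i\in N}(\gamma_i^z+\rho(i,z)c(z))={\bf e}$ gives $\widehat{\pi}(z)\cdot{\bf e}\ge\widetilde{\pi}(z)\cdot{\bf e}$, so $\widehat{\pi}(z)=\widetilde{\pi}(z)$ and therefore $\widetilde{\pi}(z)\in(\mathbb C,\|\cdot\|)^{*}$.

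Finally I would pass from $\mathbb C$ to $\mathbb B$ and verify optimality. Extend each $\widetilde{\pi}(z)$ by Hahn--Banach to a positive $\ddot{\pi}(z)\in(\mathbb B,\|\cdot\|)^{*}$; budget feasibility of $(f(t),y)$ and inequality (\ref{eqn:cost}) against preferred bundles lying in $\mathbb C_+$ transfer immediately from the $\mathbb C$-equilibrium. The one genuinely new point, where density again cannot be used, is (\ref{eqn:cost}) for a preferred bundle $\xi\in\mathbb B_+\setminus\mathbb C$. Here the interior point $v\in(\|\cdot\|\text{-}{\rm int}\,\widetilde{P_i^z}(x_i,y))\cap\mathbb B_+\cap\mathbb C$ furnished by strong properness does the work: for $\lambda\in(0,1]$ the bundles $(1-\lambda)\xi+\lambda v$ lie in the $\|\cdot\|$-interior of $\widetilde{P_i^z}(x_i,y)$, hence in $\widetilde{P_i^z}(x_i,y)\cap\mathbb B_+=P_i^z(x_i,y)$, so they are preferred; separating $\gamma_i^z-e_i+\rho(i,z)c(z)$ from the $\|\cdot\|$-open convex set $\widetilde{P_i^z}(x_i,y)-e_i+\rho(i,z)c(z)$ in $\mathbb B$ and matching the resulting continuous supporting functional with $\ddot{\pi}(z)$ on $\mathbb C$ yields $\ddot{\pi}(z)\cdot\xi+\rho(i,z)\ddot{\pi}(z)\cdot c(z)\ge\ddot{\pi}(z)\cdot e_i$ for every such $\xi$. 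The $\beta$-scaling device of Claim~4 in the proof of Theorem~\ref{thm:equivalence} then converts these weak inequalities into strict utility maximization, showing that $(f,y)$ is a cost share equilibrium of $\mathscr E_c$ under $\ddot{\pi}$.

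I expect the reconciliation in this last paragraph — guaranteeing that the supporting functionals obtained by separation in $\mathbb B$ are consistent across agents and agree with $\ddot{\pi}(z)$ on $\mathbb C$, so that a single price works simultaneously for all $i\in N$ and all $z\in\mathscr Y$ — to be the most delicate part, and it is exactly the point at which the argument leans on Theorem~3 of Podczeck and Yannelis (2008).
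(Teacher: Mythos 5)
Your opening reduction to the principal ideal and the upgrade of $\widetilde{\pi}(z)$ to $\|\cdot\|$-continuity there are sound in spirit, though organized differently from the paper: the paper first passes to the $\|\cdot\|$-closure $\mathbb D$ of $L({\bf e})$, observes that ${\bf e}$ is a quasi-interior point of $\mathbb D_+$ and that strong properness implies properness for $\mathscr E|_{\mathbb D}$, and then invokes Theorem \ref{thm:core-wal2} wholesale rather than re-running its proof. The genuine gap is in your last paragraph, at exactly the point you flag as delicate. Separating $\gamma_i^z$ from the $\|\cdot\|$-open convex set $\widetilde{P_i^z}(x_i,y)$ produces \emph{some} continuous supporting functional; there is no reason why it should agree with, or be dominated by, the positive Hahn--Banach extension $\ddot{\pi}(z)$ on $\mathbb C$ --- two functionals can support the same convex set at the same point without coinciding anywhere --- and the Hahn--Banach extension itself carries no information about preferred bundles $\xi\in\mathbb B_+\setminus\mathbb C$, since your convex-combination trick with the interior point $v$ only produces further preferred bundles that again lie outside $\mathbb C$. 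So inequality (\ref{eqn:cost}) for such $\xi$ is not established, and the theorem is not proved for the price you propose.

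The paper closes this hole by a different mechanism. It applies Lemma \ref{lem:contfunc} in $\mathbb B$ itself, with $U$ the $\|\cdot\|$-open set $V_i$ supplied by strong properness (shifted by $\rho(i,z)c(z)$), $V=\mathbb D_+$, and $\pi=\widetilde{\pi}(z)$ already extended to $\mathbb B$. This yields $\widetilde{\pi}(z)=\pi_i^1(z)+\pi_i^2(z)$ with $\pi_i^1(z)\in\mathbb B^*$ globally $\|\cdot\|$-continuous and supporting $\|\cdot\|$-${\rm cl}\,V_i\supseteq \|\cdot\|$-${\rm cl}\,P_i^z(x_i,y)$ at $\gamma_i^z+\rho(i,z)c(z)$, so the support inequality automatically covers every preferred bundle, inside or outside the ideal. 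The equilibrium price is then \emph{defined} as $\ddot{\pi}(z)=\sup\{\pi_i^1(z):i\in N\}$ via the Riesz--Kantorovich formula, shown to coincide with $\widetilde{\pi}(z)$ on $\mathbb D$ using the feasibility identity $\sum_{i\in N}(\gamma_i^z+\rho(i,z)c(z))={\bf e}$; since $\ddot{\pi}(z)\ge\pi_i^1(z)$ on the positive cone, the budget inequality for all $\xi_i\in P_i^z(x_i,y)$ follows. In short, the final price is not a Hahn--Banach extension but the supremum of the continuous components of the decomposition, and that is precisely what makes the reconciliation across agents you worry about go through. You would need to replace your final paragraph with this construction.
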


  \begin{proof}
 Define $\mathbb C= L(\bf e)$ to be the principal ideal generated by ${\bf e}$ in $\mathbb B$, where 
${\bf e}=\sum_{i\in N}e_i$.
  Then, $(\mathbb D,\|\cdot\|)$ equipped with the ordering of
  $(\mathbb B, \|\cdot\|)$ is a Banach lattice, where $\mathbb D$ denotes the
  $\|\cdot\|$-closure of $\mathbb C$ in $\mathbb B$.  Note that if $(\xi_1,\cdots,\xi_n, z)$ is a feasible 
 allocation of $\mathscr{E}$, then $\xi_i\in \mathbb C_+$ for each $i\in N$, and therefore, 
we have $x_i \in \mathbb C_+$ for each $i\in N$. Clearly, agent $i$'s utility function 
$u_i(0,z)|_{\mathbb D}$ for each $z\in \mathscr Y$ satisfies continuity, monotonicity and quasi-concavity assumptions.
 Suppose that $(h_1,\cdots, h_n, w)$ is a Pareto optimal allocation of the economy
  $\mathscr{E}|_{\mathbb D}$, which is identical with $\mathscr{E}$ except
  for the commodity space being $\mathbb D$, each agent's private consumption set
  being $\mathbb D_+$, and agent $i$'s utility being $u_i(0,z)|_{\mathbb D}$ for each $z\in 
\mathscr Y$. Then $(h_1, \cdots, h_n, w)$ is Pareto optimal allocation of $\mathscr{E}$. Take $\widehat{P_i^z}
  (h_i,w)= \widetilde{P_i^z}(h_i,w)\cap \mathbb D$ for each $i\in N$ and $z\in \mathscr Y$, where
  $\widetilde{P_i^z}(h_i,w)$ is chosen according to strong properness condition and Definition
  ~\ref{defi:strongproper}. Therefore, for each
  $i\in N$, $\widehat{P_i^z}(h_i,w)$ is convex with non-empty relative
  $\|\cdot\|$-interior in $\mathbb D$. By strong properness condition and Definition
  \ref{defi:strongproper}, for each $i\in N$,
  \[
 \widehat{P_i^z}(h_i,w)\cap \mathbb D_+= P_i^z(h_i,w)|_{\mathbb D} \mbox{ and }
 (\|\cdot\|-{\rm cl} \widehat{P_i^z}(h_i,w))\cap \mathbb D_+\neq \emptyset,
\]
 where $P_i^z(h_i,w)|_{\mathbb D}=P_i^z(h_i,w)\cap \mathbb D$. Therefore, the properness condition and 
the existence of a quasi-interior point in
$\mathbb D_+$ hold for the restricted economy
  $\mathscr{E}|_{\mathbb D}$. Note that $(f,y)$ is in the $\widehat{\sigma}$-core of
  $\mathscr{E}_c|_{\mathbb D}$. By Theorem \ref{thm:core-wal2}, there exists a
  non-zero positive function $\widehat{\pi}:\mathscr Y\to {\mathbb D}^*$ such that $(f,
  y)$ is a cost share equilibrium allocation of $\mathscr{E}_c|_ {\mathbb D}$ under the price system 
$\widehat{\pi}$. Therefore, 
  $(x_1, \cdots, x_n, y)$ is a cost share equilibrium allocation of $\mathscr{E}|_{\mathbb D}$ under the price system $\widehat{\pi}$. 
  By the Hahn-Banach theorem, there is a non-zero positive function  $\widetilde{\pi}:\mathscr Y\to {\mathbb B}^*$
  such that $\widetilde{\pi}(z)$ is an extension of $\widehat{\pi}(z)$ for all $z\in \mathscr Y$. Then $(x_1, \cdots,x_n,  y)$
  satisfies all conditions of cost share equilibrium of $\mathscr E$ except for the fact that 
\[
\widetilde{\pi}(z)\cdot \xi_i+ \rho(i, z)\widetilde{\pi}(z)\cdot c(z)> \widetilde{\pi}(z)\cdot e_i,
 \]
 for all $\xi_i\in \mathbb B_+ \setminus \mathbb D_+$ satisfying $u_i(\xi_i, z)> u_i(x_i,y)$ and 
$i\in N$.

\medskip
  Since $(x_1,\cdots,x_n, y)$ is a cost share equilibrium of $\mathscr{E}|_{\mathbb D}$ under the price 
 system $\widehat{\pi}$, $(x_1, \cdots, x_n,  y)$ is a Pareto optimal allocation in $\mathscr{E}|_{\mathbb D}$ and hence, in 
$\mathscr{E}$. Pick an $i\in N$. By the strong properness condition and Definition \ref{defi:strongproper}, there is a convex and
  $\|\cdot\|$-open subset $V_i$ of $\mathbb B$ such that
  \[
  \emptyset\neq V_i\cap \mathbb D_+ \subseteq P_i^z(x_i,y)|_{\mathbb D} \mbox{ and }
  \|\cdot\|-{\rm cl} P_i^z(x_i,y)\subseteq
  \|\cdot\|-{\rm cl} V_i.
 \] 
 By Remark \ref{rem:gamma}, 
there exists a feasible allocation $(\gamma_1^z, \cdots, \gamma_n^z,  z)$ such that for all $i\in N$,\footnote{Note that $\gamma_i^z\in \mathbb D$ for all $i\in N$.}
\begin{itemize}
\item[(i)] $\gamma_i^z+\rho(i, z)c(z)\in \mathbb D_+$ and $\gamma_i^z\in \|\cdot\|$-${\rm cl}P_i^z(x_i,y)|_{\mathbb D}$; and 
\item[(ii)] $\widetilde{\pi}(z)\cdot \gamma^z_i+ \rho(i, z)\widetilde{\pi}(z)\cdot c(z)= \widetilde{\pi}(z)\cdot e_i$.
\end{itemize}
 For any $\xi_i\in V_i\cap \mathbb D_+$, since $\xi_i\in P_i^z(x_i,y)$ and $(x_1,\cdots, x_n, y)$ is a cost share equilibrium allocation of $\mathscr E|_{\mathbb D}$
under the price system $\widetilde{\pi}$, by (ii), we have 
\begin{eqnarray}\label{eq:3.4}
\widetilde{\pi}(z)\cdot \xi_i+ \rho(i, z)\widetilde{\pi}(z)\cdot c(z)> \widetilde{\pi}(z)\cdot \gamma^z_i+ \rho(i, z)\widetilde{\pi}(z)\cdot c(z).
 \end{eqnarray}
From (ii), it follows that $\gamma_i^z\in \|\cdot\|$-${\rm cl}V_i$. Define $W_i:=V_i+\{\rho(i, z)c(z)\}$ and note that 
$\|\cdot\|$-${\rm cl} W_i=\|\cdot\|$-${\rm cl}V_i+\{\rho(i, z)c(z)\}$.Therefore, 
\begin{eqnarray}\label{eq:3.5}
\gamma_i^z+\rho(i, z)c(z)\in \mathbb D_+\cap \|\cdot\|-{\rm cl}W_i.
\end{eqnarray}
 By Lemma \ref{lem:contfunc}, Equation (\ref{eq:3.4}) and Equation (\ref{eq:3.5}), there exist a $\pi_i^1(z)\in \mathbb B^\ast$ and a linear functional $\pi_i^2(z)$ on
  $\mathbb B$ such that 
\begin{itemize}
\item[(iii)] $\pi_i^1 (z)\cdot (\xi_i+\rho(i, z)c(z))\ge \pi_i^1(z)\cdot (\gamma^z_i+\rho(i, z)c(z))$ for all
  $\xi_i\in V_i$; 
\item[(iv)] $\pi_i^2 (z)\cdot (\xi_i+\rho(i, z)c(z))\ge \pi_i^2(z)\cdot (\gamma^z_i+\rho(i, z)c(z))$ for all
  $\xi_i\in \mathbb D_+$; and  
\item[(v)] $\widetilde{\pi}(z)= \pi_i^1(z)+\pi_i^2(z)$ for all $z\in \mathscr Y$. 
\end{itemize}
Since $\mathbb D_+$ is a cone, we have $\pi_i^2(z)\cdot (\gamma^z_i+\rho(i, z)c(z))= 0$. It follows that  $\pi_i^2 (z)\cdot (\xi_i+ 
\rho(i, z)c(z))\ge 0$ for all
  $\xi_i\in \mathbb D_+$, which can equivalently be written as $\pi_i^2(z)\cdot \zeta_i\ge 0$ for all $\zeta_i\in 
  \mathbb D_+$. Consequently, we have
  \begin{itemize}
  \item[(vi)] $\widetilde{\pi}(z)\cdot (\gamma^z_i+\rho(i, z)c(z))= \pi_i^1(z)\cdot (\gamma^z_i+\rho(i, z)c(z))$; and 
  \item[(vii)] $\widetilde{\pi}(z)\cdot \zeta_i\ge
  \pi_i^1(z)\cdot \zeta_i$ for all $\zeta_i\in \mathbb D_+$.
  \end{itemize}
    Since $\pi_1^i(z)\in \mathbb B^*$ and $\|\cdot
  \|$-cl$P_i^z(x_i,y)\subseteq \|\cdot\|$-${\rm cl}V_i$, from (iii), it follows that 
  \[
  \pi_i^1 (z)\cdot (\xi_i+\rho(i, z)c(z))\ge \pi_i^1(z)\cdot (\gamma^z_i+\rho(i, z)c(z))
  \] 
  for all $\xi_i\in P_i^z(x_i,y)$. In view of (vi), we have 
  \begin{itemize}
    \item[(viii)] $\pi_i^1 (z)\cdot (\xi_i+\rho(i, z)c(z))\ge \widetilde{\pi}(z)\cdot (\gamma^z_i+\rho(i, z)c(z))$   for all $\xi_i\in P_i^z(x_i,y)$.
  \end{itemize}
  Since $\mathbb B$ is a locally solid Riesz space, $\mathbb B^\ast$ is an ideal
  in the order dual of $\mathbb B$. 
  Define $\ddot{\pi}:\mathscr Y\to \mathbb B^\ast$ such that $\ddot{\pi}(z)
  =\sup\{\pi_i^1(z): i\in N\}$ for each $z \in \mathscr Y$.
  By the Riesz-Kantorovich formulas and techniques similar to those in
  Theorem \ref{thm:core-wal2}, we have 
  \[
  \ddot{\pi}(z)\cdot {\bf e}\geq \sum_{i\in N} \pi_i^1 (z)(\gamma_i^z+\rho(i, z)c(z)).
  \]
   Using (vi) and the feasibility of $(\gamma_1^z, \cdots, \gamma_n^z, z)$, we can obtain that $\ddot{\pi}(z)\cdot {\bf e}\geq \widetilde{\pi}(z)\cdot {\bf e}$.
 Moreover, the Riesz-Kantorovich formulas
  and (vii) imply that $\widetilde{\pi}(z)\cdot \zeta\geq \ddot{\pi}(z)\cdot \zeta$ for all 
  $\zeta\in \mathbb D_+$. Since
  $\mathbb D= L({\bf e})$, we have $\ddot{\pi}(z) \equiv \widetilde{\pi}(z)$ on
  $\mathbb D$ for each $z\in \mathscr Y$, which can be combined with (ii) and (viii) 
  to derive 
  \[
  \ddot{\pi}(z)\cdot \xi_i+ \rho(i, z)\ddot{\pi}(z)\cdot c(z)\ge \ddot{\pi}(z)\cdot e_i
  \]
  for all $\xi_i\in P_i^z(x_i, y)$. Analogous to {\bf Claim 4} in the proof of Theorem \ref{thm:equivalence}, one can verify that $(f, y)$
  is a cost share equilibrium of
  $\mathscr{E}_c$. 
  \end{proof}

\subsection{The Size of Blocking Coalitions}\label{sec:vind}
In this subsection, we characterize the $\sigma$-core in terms of the size of blocking coalitions, thus generalizing the results of Schmeidler (1972) and Vind (1972) to 
our framework.
\begin{theorem}\label{thm:infintevind}
  Let $\mathscr E_c$ be an equal treatment economy with public projects constructed from the finite economy $\mathscr E$. 
  Let $\sigma$ be a contribution measure for $\mathscr{E}_c$, and let $\widehat{\sigma}$ denote the corresponding contribution
measure induced by $\sigma$ in $\mathscr{E}_c$. Let $(f,y)$ be a feasible allocation in
  $\mathscr{E}_c$ such that $(f(t), y)= (x_i, y)$ for all $t\in T_i$
  and $i\in N$. If $(f, y)$ is not in the $\widehat{\sigma}$-core of
  $\mathscr{E}_c$, then for any $0<\varepsilon<1$, there is a
  coalition $S$ with $\mu(S)=\varepsilon$ blocking $(f,y)$.
  \end{theorem}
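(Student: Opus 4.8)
The plan is to exploit the equal-treatment structure so as to reduce the whole argument to a finite-dimensional bookkeeping of the ``participation profile'' of a coalition, thereby avoiding any appeal to Lyapunov's theorem (which is unavailable in the possibly non-separable, infinite-dimensional commodity space $\mathbb B$). Since $(f,y)\notin\C^{\widehat{\sigma}}(\mathscr E_c)$, fix a coalition $S_0$, a project $z$ and an allocation $g$ that $\widehat{\sigma}$-block $(f,y)$. First I would pass to an \emph{equal-treatment} block: replacing $g$ on each type by its average $\bar g_i:=\frac{1}{\mu(S_0\cap T_i)}\int_{S_0\cap T_i}g\,d\mu$ leaves the coalition's aggregate consumption, and hence feasibility, unchanged, while quasi-concavity and strong monotonicity keep each participating type strictly better off. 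Writing $\alpha_i:=n\,\mu(S_0\cap T_i)\in[0,1]$ for the induced Aubin participation weights and $\sigma_i:=\sigma(\{i\},z)$, the block takes the profile form $\sum_{i\in N}\alpha_i(\bar g_i-e_i)+\bigl(\sum_{i\in N}\sigma_i\alpha_i\bigr)c(z)=0$, with $\mu(S_0)=\frac{1}{n}\sum_{i\in N}\alpha_i$.

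For a target $\varepsilon\le\mu(S_0)$ I would simply rescale the profile. Put $\lambda:=\varepsilon/\mu(S_0)\in(0,1]$, keep the project $z$ and the per-type bundles $\bar g_i$, and choose a sub-coalition $S\subseteq S_0$ with $\mu(S\cap T_i)=\lambda\,\mu(S_0\cap T_i)$ for each $i$; such $S$ exists because $\mu$ is non-atomic on the intervals $T_i$. As the profile equation above is homogeneous in the weights, scaling by $\lambda$ preserves feasibility exactly, strict preference is inherited from $\bar g_i$, and $\mu(S)=\lambda\mu(S_0)=\varepsilon$. (Equivalently, this is the convex combination of the block with the empty coalition.)

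The interesting regime is $\mu(S_0)\le\varepsilon<1$. Here I would blend the partial block with a \emph{grand} near-block. By Remark~\ref{rem:gamma} there is a feasible allocation $(\gamma_1^z,\dots,\gamma_n^z,z)$ with $u_i(\gamma_i^z,z)\ge u_i(x_i,y)$ for all $i$ and $\sum_{i\in N}\gamma_i^z+c(z)={\bf e}$; in profile terms the all-ones weight vector weakly blocks via $z$. For $\tau\in[0,1)$ form the combined profile $\alpha_i^\tau:=(1-\tau)\alpha_i+\tau$ together with the matching weighted averages $h_i^\tau$ of $\bar g_i$ and $\gamma_i^z$ (carrying weights $(1-\tau)\alpha_i$ and $\tau$). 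Linearity of the profile equation in the pair (weights, bundles), together with $\sum_{i\in N}\gamma_i^z={\bf e}-c(z)$ and $\sum_{i\in N}\sigma_i=1$, yields exact feasibility, while $\mu(\tau):=\frac{1}{n}\sum_{i\in N}\alpha_i^\tau=(1-\tau)\mu(S_0)+\tau$ runs continuously over $[\mu(S_0),1)$; choosing $\tau$ with $\mu(\tau)=\varepsilon$ and realizing $\alpha_i^\tau/n$ as the measures $\mu(S\cap T_i)$ produces the desired coalition.

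The main obstacle, and the only place where the infinite-dimensional, possibly interior-free lattice $\mathbb B$ bites, is maintaining \emph{strict} preference for every active type in the large-case blend. For types with $\alpha_i>0$ the bundle $h_i^\tau$ is a convex combination of the strictly preferred $\bar g_i$ and the weakly preferred $\gamma_i^z$ carrying weight strictly below one on the latter, so it stays in the (relatively open, convex) strictly-preferred set; but types with $\alpha_i=0$ receive exactly $\gamma_i^z$, which is only weakly preferred. I would repair this by a strong-monotonicity free-disposal step: shave a small strictly positive amount off the strictly-preferred types (they stay strictly preferred because their preferred sets are relatively open) and redistribute it to the weakly-preferred types to push them strictly above indifference, all while preserving the equality in feasibility. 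The same relative-openness and strong-monotonicity considerations secure strictness in the initial averaging step, and it is precisely the no-interior case that makes this reduction to an equal-treatment \emph{strict} block the technical heart of the proof; once it is in place, the remainder is finite-dimensional bookkeeping on the participation weights together with the non-atomicity of $\mu$ on each $T_i$.
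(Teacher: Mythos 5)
Your proposal is correct and follows essentially the same route as the paper's proof: reduce to an equal-treatment block by type-averaging, shrink by proportional subsets of each $T_i$ (non-atomicity) for $\varepsilon\le\mu(S_0)$, and for larger $\varepsilon$ blend the block with the weakly preferred feasible allocation $(\gamma_1^z,\dots,\gamma_n^z,z)$ from Remark \ref{rem:gamma}, restoring strict preference for the newly added agents by shaving a small positive amount $\beta$ off the strictly preferred bundles and redistributing it. Your participation-profile bookkeeping $\alpha_i^\tau=(1-\tau)\alpha_i+\tau$ is exactly the paper's choice $\mu(E_i)=\delta\mu(S_i)+(1-\delta)\mu(T_i)$ with $\tau=1-\delta$, and your free-disposal repair is the paper's $\beta$-transfer to the set $D$.
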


\begin{proof}
Since $(f,y)$ is blocked in $\mathscr E_c$, there exists a coalition $S \subseteq T$ that blocks $(f,y)$ via an allocation $(g,z)$ such that

\begin{itemize}

\item[(i)] $\int_{S} g \, d\mu + \widetilde{\sigma}(S,z)c(z) = \int_{S} e\, d\mu$; and

\item[(ii)] $u_t(g(t),z) > u_t(f(t),y)$ $\mu$-a.e. on $S$.

\end{itemize}
Define $S_i = S \cap T_i$ for all $i \in N$ and $\widetilde{S}: = \left\{i \in N: \mu(S_i) > 0\right\}$. For each $i\in \widetilde{S}$, 
define 
\[
g_i = \frac{1}{\mu(S_i)}\int_{S_i}g\, d\mu.
\]  
Therefore, one can rewrite the condition (i) as
\begin{equation}\label{eqnv1}
\sum\limits_{i \in \widetilde{S}}\mu(S_i)g_i + \sum\limits_{i \in \widetilde{S}} \sigma(\left\{i\right\}, z)\frac{\mu(S_i)}{\mu(T_i)}c(z) = \sum\limits_{i \in \widetilde{S}}\mu(S_i) e_i.
\end{equation}
Choose an element $\delta \in (0,1)$. Since $\mu$ is atomless, there exists some $E_i\in \mathscr T$ such that $E_i \subseteq S_i$ and 
$\mu(E_i) = \delta\mu(S_i)$. From Equation (\ref{eqnv1}), it follows that 
\[
\sum\limits_{i \in \widetilde{S}}\mu(E_i)g_i + \sum\limits_{i \in \widetilde{S}} \sigma(\left\{i\right\},z)\frac{\mu(E_i)}{\mu(T_i)}c(z) = \sum\limits_{i \in \widetilde{S}}\mu(E_i) e_i.
\]
Let $E = \bigcup\{E_i:i\in \widetilde{S}\}$ and define $h:T\to \mathbb B_+$ by letting $h(t)=g_i$ if $t\in S_i$ and $i\in \widetilde{S}$; and 
$h(t)=x_i$, if $t\in T_i\setminus S_i$ and $i\in N$. Hence, one can observe from the above equation that
\[
\int_{E} h\, d\mu + \widetilde{\sigma}(E, z)c(z) = \int_{E} e\, d\mu, 
\]
where $\mu(E) =\delta\mu(S)$. Thus there exists a coalition $E \subseteq S$ with $\mu(E) = \delta\mu(S)$ that blocks $(f,y)$. This proves for $\varepsilon \leq \mu(S)$. 

\medskip
If $\mu(S) = 1$, then our proof is done. So suppose not and that $\mu(T \setminus S) > 0$.  Let $\varepsilon> 0$ be a number such that 
$\mu(S)< \varepsilon< \mu(T)$. Take $\delta\in (0, 1)$ such that 
\[
\delta= 1-\frac{\varepsilon-\mu(S)}{\mu(T\setminus S)}.
\]
By assumption of continuity, we can find some $\beta\in \mathbb B_+\setminus \{0\}$ and an allocation of  private goods $\widetilde{g}: S \to \mathbb B_+$ 
such that $u_t(\widetilde{g}(t), z) > u_t(f(t), y)$ $\mu$-a.e. on $S$ and
\begin{eqnarray}\label{eqn:tildeg}
\int_{S} \widetilde{g}\, d\mu = \int_{S} g\, d\mu - \beta.
\end{eqnarray}
By Remark \ref{rem:gamma}, there exists a feasible allocation $(\gamma_1^z,\cdots, \gamma_n^z, z)$ in $\mathscr E$ such that 
$u_i(\gamma_i^z,z) \ge u_i(x_i,y)$ for all 
$i\in N$. For each $z\in \mathscr Y$, define $\varphi^z:T\to \mathbb B_+$ by letting $\varphi^z(t)=\gamma^z_i$ for all $t\in T_i$ and all $i\in N$. Then 
$(\varphi^z, z)$ is a feasible allocation in $\mathscr E_c$ satisfying $u_t(\varphi^z(t),z) \ge u_t(f(t),y)$ for all $t\in T$. 
Define $\widetilde{g}_{\delta}: S \to \mathbb B_+$ as 
\[
\widetilde{g}_{\delta}(t) = \delta \widetilde{g}(t) + ( 1- \delta)\varphi^{z}(t).
\] 
By the quasi-concavity of $u_t$, we have $u_t(\widetilde{g}_{\delta}(t), z) > u_t(f(t),y)$ $\mu$-a.e. on $S$. Let $C = T \setminus S$ and 
define $C_i = C \cap T_i$ for all $i\in N$.  Let $\widetilde{C} = \left\{i \in N: \mu(C_i) >0\right\}$. Since $\mu$ is atomless, one can choose 
an element $D_i\in \mathscr T$ such that $D_i \subseteq C_i$ and $\mu(D_i) = (1-\delta)\mu(C_i)$ for all $i\in \widetilde C$. Hence, using 
the definition of $\widetilde{\sigma}$ along with the facts that $\varphi^z(t)= \gamma_i^z$ and $e(t)=e_i$ for all $t\in T_i$ and $i\in N$, we get  
\[
\int_{D_i} \varphi^z\, d\mu + \widetilde{\sigma}(D_i,z)c(z) - \int_{D_i}e\, d\mu = (1 - \delta)\left[\int_{C_i} \varphi^z\, d\mu + \widetilde{\sigma}(C_i, z)c(z) - \int_{C_i}e\, d\mu\right].
\]
Let $D= \bigcup\{D_i:i\in \widetilde C\}$. Then $\mu(D) = (1 - \delta)\mu(C)$ and one has
\[
\int_{D} \varphi^z\, d\mu + \widetilde{\sigma}(D,z)c(z) - \int_{D}e\, d\mu = (1 - \delta)\left[\int_{C} \varphi^z\, d\mu + \widetilde{\sigma}(C, z)c(z) - \int_{C}e\, d\mu\right].
\]
Define $h: D \to \mathbb B_+$ as
\[
h(t) = \varphi^{z}(t) + \frac{\delta\mu(S)}{\mu(D)} \beta.
\]
Let $E = S \cup D$. Then $\mu(E) = \mu(S) + (1-\delta)\mu(T \setminus S)=\varepsilon$. Define an allocation of private goods $\psi: T \to 
\mathbb B_+$ as $\psi(t): = \widetilde{g}_{\varepsilon}(t)$ if $t \in S$ and $\psi(t): =h(t)$, otherwise. Consequently, $u_t(\psi(t), z) > 
u_t(f(t), y)$ $\mu$-a.e. on $E$. Using the definitions of $\widetilde{g}_\delta$, $h$ and Equation (\ref{eqn:tildeg}), we have  
\begin{eqnarray*}
\int_{E} (\psi-e)\, d\mu + \widetilde{\sigma}(E,z) c(z) 
&= &\left[\int_{S}(\widetilde{g}_{\delta}-e)\, d\mu + \widetilde{\sigma}(S,z)c(z) \right] + \left[ \int_{D}(h-e)\, d\mu + \widetilde{\sigma}(D,z)c(z) \right]\noindent \\
&= & \delta\left[ \int_{S} (g-e)\, d\mu + \widetilde{\sigma}(S,z)c(z) \right] + (1 - \delta)\left[\int_{T}(\varphi^z-e)\, d\mu + \widetilde{\sigma}(T,z)c(z)\right]\noindent \\
&=& 0
\end{eqnarray*}
This completes the proof. 
\end{proof}

\section{Equivalence in Finite Economies}\label{sec5}
This section builds on the results of Subsections \ref{sec:equi} and \ref{sec:vind} to establish a characterization of cost share equilibria in a finite economy endowed with an abstract set of public projects and an infinite dimentional private commodity space. In particular, we provide several characterizations of cost share equilibria, formulated in terms of Aubin $\sigma$-core allocations, 
$\sigma$-Edgeworth equilibria, and non-dominated allocations. For the remainder of the paper, the following assumption will be maintained implicitly, with all subsequent results derived under one of the three alternative conditions specified below:
\begin{itemize}
\item[(i)] The positive cone $\mathbb{B}_+$ possesses a nonempty interior.
\item[(ii)] The positive cone $\mathbb{B}_+$ has an empty interior but admits a quasi-interior point, together with the satisfaction of the properness condition.
\item[(iii)] The positive cone  $\mathbb{B}_+$ admits no quasi-interior point, and the strong properness condition holds.
\end{itemize}
\subsection{Fuzzy Coalitions}
To begin with, we introduce the notion of a coalition in our finite economy $\mathscr E$, but in the sense of Aubin. 

\begin{definition}\label{defn:Aubin}
Define the set
\[
\mathcal A_c = \left\{ \gamma = (\gamma_1,\cdots,\gamma_n) \in [0,1]^n : \gamma_i > 0 \mbox{ for atleast one } i \in N\right\} 
\]
We call an element $\gamma \in \mathcal A_c$ as an {\it Aubin} (or {\it generalized}) {\it coalition}. Furthermore, the set 
$\left\{i \in N: \gamma_i > 0\right\}$ is called the {\it support} of the Aubin coalition $\gamma$ and is denoted by ${\rm supp}\gamma$. 
\end{definition}
The set $\mathcal A_c$ may be interpreted as the collection of generalized coalitions, in the sense that, for any $i \in N$, $\gamma_i$ represents the share of endowment employed by agent $i$ in the coalition $\gamma$. It is evident that the set of ordinary coalitions $\Sigma$ is contained in $\mathcal A_c$ as each ordinary coalition $S$ can be identified with its characteristic function $\chi_S$. As a subsequent step, we extend the notion of the veto mechanism from $\Sigma$ to the broader class of Aubin coalitions by enlarging each contribution scheme $\sigma$ from ordinary coalitions to generalized ones. Given that $\sigma(S,z) = \sum_{i\in S}\sigma(\{i\},z)$ for each coalition $S \in \Sigma$ and each $z\in \mathscr Y$, the contribution of the Aubin coalition $\gamma$ to the realization of a public project 
$z$ is given by
\[ 
\widetilde{\sigma}(\gamma, z) = \sum\limits_{i \in N} \gamma_i\sigma(\{i\}, z).
\]
In particular, when $\gamma_i=1$ if $i\in S$; and $\gamma_i=0$, otherwise, then it follows that $\widetilde{\sigma}(\gamma, z) = \sigma(S, z)$, and thus $\widetilde{\sigma}$ can be thought as a generalization of the contribution scheme $\sigma$. Let $\rho$ denote the individual cost distribution function corresponding to $\sigma$. Therefore, one can equivalently write 
$\widetilde{\sigma}$ as
\[
\widetilde{\sigma}(\gamma, z) = \sum\limits_{i\in N} \gamma_i\rho(i, z)
\] 
for each $\gamma\in \mathcal A_c$ and each $z\in \mathscr Y$. 
It follows that the individual cost contribution of agent $i \in N$ when participating in the Aubin coalition $\gamma$ is equal to $\gamma_i\rho(i,z)$, for each $i \in {\rm supp}\gamma$. The subsequent theorem investigates the relationship between the Aubin $\sigma$-core and cost share equilibria in a finite economy with public projects.

\begin{theorem}\label{fuzzyequivalence}
Let $\mathscr E$ be a finite economy. Suppose that $\sigma$ is a contribution measure with the corresponding cost distribution function $\rho$. Then $\mathscr{C}^A_{\sigma}(\mathscr E) = CE_{\rho}(\mathscr E)$.
\end{theorem}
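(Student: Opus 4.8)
The plan is to prove the two inclusions $CE_\rho(\mathscr E)\subseteq \mathscr C^A_\sigma(\mathscr E)$ and $\mathscr C^A_\sigma(\mathscr E)\subseteq CE_\rho(\mathscr E)$ separately, using the equal-treatment continuum economy $\mathscr E_c$ as a bridge so that the equivalence results of Subsection~\ref{sec:equi} become available. For the inclusion $CE_\rho(\mathscr E)\subseteq \mathscr C^A_\sigma(\mathscr E)$ I would argue directly by contradiction. Suppose $(x_1,\cdots,x_n,y)$ is a cost share equilibrium under a price system $\pi$ with cost distribution $\rho$, yet it is Aubin $\sigma$-blocked by some $\gamma\in\mathcal A_c$ via $(\xi_1,\cdots,\xi_n,z)$, so that $u_i(\xi_i,z)>u_i(x_i,y)$ for every $i\in\mathrm{supp}\,\gamma$ and $\sum_{i\in N}\gamma_i\xi_i+\widetilde\sigma(\gamma,z)c(z)=\sum_{i\in N}\gamma_i e_i$. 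Since each such $(\xi_i,z)$ lies outside agent $i$'s budget set, optimality yields $\pi(z)\cdot\xi_i+\rho(i,z)\pi(z)\cdot c(z)>\pi(z)\cdot e_i$; multiplying by $\gamma_i>0$, summing over $\mathrm{supp}\,\gamma$, and using $\widetilde\sigma(\gamma,z)=\sum_{i}\gamma_i\rho(i,z)$ contradicts the feasibility identity once $\pi(z)$ is applied. This is the routine direction.

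The substantive direction is $\mathscr C^A_\sigma(\mathscr E)\subseteq CE_\rho(\mathscr E)$. Given $(x_1,\cdots,x_n,y)$ in the Aubin $\sigma$-core, I would lift it to the equal-treatment allocation $(f,y)$ in $\mathscr E_c$ with $f(t)=x_i$ on $T_i$, and first show that $(f,y)$ lies in the $\widehat\sigma$-core of $\mathscr E_c$. Arguing contrapositively, if $(f,y)$ were $\widehat\sigma$-blocked by a coalition $S$ via $(g,z)$, I would set $S_i=S\cap T_i$, $\gamma_i=\mu(S_i)/\mu(T_i)\in[0,1]$ and $\xi_i=\frac{1}{\mu(S_i)}\int_{S_i}g\,d\mu$ for those $i$ with $\mu(S_i)>0$; since $\mu(S)>0$ at least one $\gamma_i>0$, so $\gamma\in\mathcal A_c$. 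The measure-preserving identities $\widehat\sigma(S,z)=\sum_i\gamma_i\sigma(\{i\},z)=\widetilde\sigma(\gamma,z)$ and $\int_{S_i}e\,d\mu=\mu(T_i)\gamma_i e_i$ convert the feasibility of the continuum block into the Aubin feasibility identity for $\gamma$, and the strict preference is preserved under this averaging because $u_i(\cdot,z)$ is quasi-concave and continuous: the better-than set $\{v\in\mathbb B_+:u_i(v,z)>u_i(x_i,y)\}$ is convex and relatively open in $\mathbb B_+$, so a separation argument shows that its barycenter $\xi_i$ is again strictly preferred, i.e.\ $u_i(\xi_i,z)>u_i(x_i,y)$. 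Hence $\gamma$ would Aubin $\sigma$-block $(x_1,\cdots,x_n,y)$, contradicting membership in the Aubin $\sigma$-core.

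With $(f,y)$ in the $\widehat\sigma$-core of $\mathscr E_c$ established, I would invoke the appropriate equivalence theorem among Theorems~\ref{thm:equivalence}, \ref{thm:core-wal2}, and \ref{thm:core-wal3}, selected according to whether $\mathbb B_+$ has nonempty interior, only a quasi-interior point, or neither (together with the relevant properness or strong-properness hypothesis), to obtain a non-zero price system $\pi$ under which $(f,y)$ is a cost share equilibrium of $\mathscr E_c$; by construction its cost distribution is the $\widehat\rho$ associated with $\widehat\sigma$, hence with $\sigma$ and $\rho$. Finally I would descend back to $\mathscr E$: because $f$, $e$, $u_t$, and $\widehat\rho(\cdot,z)$ are all constant of type on each $T_i$ and coincide there with the data of agent $i$, the almost-everywhere budget-optimality of $(f(t),y)$ in $\mathscr E_c$ transfers to optimality of $(x_i,y)$ on agent $i$'s budget set in $\mathscr E$, yielding $(x_1,\cdots,x_n,y)\in CE_\rho(\mathscr E)$.

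Two points deserve care. The main obstacle is the utility-preservation step in the continuum-to-Aubin reduction: one must guarantee that averaging $g$ over $S_i$ keeps the bundle strictly preferred, which I would justify by the convexity-of-preferences (barycenter) argument above rather than a naive Jensen inequality. The second, purely bookkeeping but easy to mishandle, is the consistent tracking of the factor $1/n$ relating $c$ to $\widehat c(y)=c(y)/n$ and $\rho$ to $\widehat\rho=n\rho$, so that the individual cost term $\rho(i,z)\pi(z)\cdot c(z)$ in $\mathscr E$ matches $\widehat\rho(t,z)\pi(z)\cdot\widehat c(z)$ in $\mathscr E_c$; these cancellations are precisely what make the descent in the last step an identity.
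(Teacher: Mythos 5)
Your proposal is correct and follows essentially the same route as the paper's proof: the inclusion $CE_{\rho}(\mathscr E)\subseteq \mathscr C^A_{\sigma}(\mathscr E)$ by the direct budget-set contradiction, and the converse by lifting to the equal-treatment allocation in $\mathscr E_c$, converting any $\widehat{\sigma}$-blocking coalition into an Aubin coalition by averaging over the sets $S\cap T_i$, invoking Theorems \ref{thm:equivalence}, \ref{thm:core-wal2} or \ref{thm:core-wal3}, and descending back to $\mathscr E$. Your only departures are cosmetic: a rescaled choice of $\gamma_i$ (harmless, since the Aubin feasibility identity is homogeneous in $\gamma$) and an explicit separation/barycenter justification of the strict-preference-preservation step where the paper simply appeals to ``Jensen's integral inequality''.
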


\begin{proof}
The proof is analogous to the proof of Theorem 4.1 in Graziano and Romaniello (2012) and is relegated to Appendix. 
\end{proof}

A possible interpretation of the above equivalence theorem lies in the classical convergence results in terms of replicating a finite economy. On that note, define for each positive integer $r$ , the $r$-fold replica economy of $\mathscr E$ as the economy $\mathscr E_r$. The construction of the replicated economy entails replacing each agent $i$ of economic weight one with $r$ identical agents of total economic weight equal to $r$. The replicated economy has the following features:

\begin{itemize}

\item the economy $\mathscr E_r$ has the same commodity-price duality of $\mathscr E$; the same set of public projects $\mathscr Y$; the cost function defined by $c_r(z) = rc(z)$;

\medskip
\item for each $i =1,\cdots,n$, there are $r$ agents of type $i$, each one is indexed by $(i,j)$ with $j =1,\cdots,r$, having the same initial endowment $e_{(i,j)} = e_i$ and the same utility functions $u_{(i,j)}(\cdot,z) = u_i(\cdot,z)$ for any $z \in \mathscr Y$.

\end{itemize}
In an economy devoid of any public projects, the total initial endowment of the coalition made by the $r$-many agents of type $i$  becomes $re_i$, and then the total initial endowment of the replicated economy is given by $r\sum_{i\in N}e_i$. In the presence of public project, assuming that the cost of public projects are being paid upfront, one can argue that $e_i - \sigma(\{i\},z)c(z)$ is the bundle of goods that agent $i$ can exchange in the market. Consequently, in the corresponding $r$-fold replica economy $\mathscr E_r$, the $r$-many agents of type $i$ can exchange the commodity bundle $r(e_i - \sigma(\{i\},z)c(z))$. Summing over all agents in $\mathscr E_r$, the total commodity bundle available for exchange in the market is given by $r(\sum\limits_{i \in N}e_i - c(z))$. This, in comparison to $r\sum\limits_{i \in N} e_i$, yields that a natural choice for a cost function in the replicated economy is $rc(z)$. In economies characterized by an increasing number of agents, the specification of such a cost function resonates with the fact that there is an increasing level of cost for supplying the provision of public projects to a larger population. Within this framework, $z \in \mathscr Y$ is naturally interpreted as the provision level of a public good, exemplified by goods such as defense or security.

\begin{definition}
Let $\sigma \in \mathscr M$ be a contribution measure of the economy $\mathscr{E}.$ A contribution measure of the $r-$fold replica economy $\mathscr{E}_{r}$ is defined as $\sigma(\{i,j\})=\frac{\sigma(\{i\})}{r},$ for each $j=1,\cdots,r.$ A feasible allocation $(x_{1},\cdots,x_{n},y)$ of the economy $\mathscr E$ is an $\sigma$-{\it Edgeworth equilibrium} of 
$\mathscr{E}$ whenever the corresponding equal treatment allocation
\[
(x_{(1,1)},\cdots,x_{(1,r)},\cdots,x_{(n,1)},\cdots,x_{(n,r)}, y)
\]
with $x_{(i,h)}=x_{(i,k)}$ for any $h,k=1,\cdots,r$ and any $i=1,\cdots,n$, belongs to the $\sigma$-{\it core} of $\mathscr{E}_{r}$, for each $r.$ 
\end{definition}

\noindent
In the following proposition, we interpret the Aubin  $\sigma$-core  allocations of the finite economy $\mathscr E$ in terms of the  
$\sigma$-Edgeworth equilibria. This result, combined with Theorem \ref{fuzzyequivalence}, yields that cost share equilibria is equivalent to the $\sigma$-Edgeworth equilibria.

\begin{proposition}\label{eeequivalence}
Assume that $\sigma$ is a contribution measure in $\mathscr E$, for which $e_{i} - \sigma(\{i\},z)c(z) \gg 0$ holds  for each agent $i \in N$, and $z \in \mathscr{Y}.$ Then the $\sigma$-Aubin core of the economy  $\mathscr{E}$ coincides with the $\sigma$-Edgeworth equilibria.
\end{proposition}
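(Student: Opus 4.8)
The plan is to establish the two inclusions $\mathscr{C}^A_{\sigma}(\mathscr E)\subseteq\{\sigma\text{-Edgeworth equilibria}\}$ and $\{\sigma\text{-Edgeworth equilibria}\}\subseteq\mathscr{C}^A_{\sigma}(\mathscr E)$ separately, in each case arguing by contraposition and exploiting the correspondence between a \emph{rational} Aubin coalition $\gamma$ with common denominator $r$ and a genuine coalition of the $r$-fold replica $\mathscr E_r$. Throughout I would write $a_i^z:=e_i-\sigma(\{i\},z)c(z)$, so that the hypothesis reads $a_i^z\gg 0$ for every $i\in N$ and $z\in\mathscr Y$, and record that $\gamma$ blocks $(x_1,\dots,x_n,y)$ via bundles $(\xi_i)$ and a project $z$ precisely when $u_i(\xi_i,z)>u_i(x_i,y)$ on ${\rm supp}\,\gamma$ and $\sum_{i}\gamma_i(\xi_i-a_i^z)=0$, since $\widetilde{\sigma}(\gamma,z)c(z)=\sum_i\gamma_i\sigma(\{i\},z)c(z)$.

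For the inclusion $\mathscr{C}^A_{\sigma}(\mathscr E)\subseteq\{\sigma\text{-Edgeworth}\}$, suppose $(x_1,\dots,x_n,y)\in\mathscr{C}^A_{\sigma}(\mathscr E)$ fails to be a $\sigma$-Edgeworth equilibrium. Then for some $r$ the equal-treatment allocation is $\sigma$-blocked in $\mathscr E_r$ by a coalition $S$; letting $s_i$ be the number of type-$i$ members of $S$ and setting $\gamma_i:=s_i/r$, I would average the blocking bundles within each type and invoke quasi-concavity of $u_i(\cdot,z)$ to obtain a type-symmetric improving bundle $\bar\xi_i$ with $u_i(\bar\xi_i,z)>u_i(x_i,y)$. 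Using $c_r=rc$ and $\sigma(\{(i,j)\})=\sigma(\{i\})/r$, the replica feasibility constraint, after dividing by $r$, becomes $\sum_i\gamma_i(\bar\xi_i-a_i^z)=0$, so $\gamma$ blocks $(x_1,\dots,x_n,y)$ in $\mathscr E$, a contradiction. The reverse inclusion runs the same computation backwards in the rational case: if $(x_1,\dots,x_n,y)$ is $\sigma$-Edgeworth but is blocked by an Aubin coalition $\gamma$ with every $\gamma_i=s_i/r$ rational, then placing $s_i$ copies of type $i$, each consuming $\xi_i$, produces a $\sigma$-block of the equal-treatment replica in $\mathscr E_r$, contradicting the Edgeworth property.

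The substance of the argument, and the step I expect to be the main obstacle, is reducing an arbitrary, possibly \emph{irrational}, blocking coalition $\gamma$ to the rational case. Here I would use the survival condition $a_i^z\gg 0$ together with continuity to manufacture strict slack: replacing each $\xi_i$ by $\lambda\xi_i$ with $\lambda<1$ close to $1$ preserves all the strict preferences $u_i(\lambda\xi_i,z)>u_i(x_i,y)$ and yields a surplus $s:=(1-\lambda)\sum_i\gamma_i a_i^z\gg 0$, equivalently $\sum_i\gamma_i(\lambda\xi_i-a_i^z)=-s$. Approximating $\gamma$ by rational vectors $\gamma'$ with the same support and $\gamma'\to\gamma$ keeps the residual surplus $S(\gamma'):=\sum_i\gamma_i'(a_i^z-\lambda\xi_i)$ within the fixed finite-dimensional span of the net trades $\{a_i^z-\lambda\xi_i\}$ and close in norm to $s$; once $S(\gamma')\ge 0$ holds in the vector order, it can be distributed among the members as nonnegative increments, and strong monotonicity guarantees that the resulting bundles still strictly improve upon $(x_i,y)$ while satisfying the replica feasibility constraint exactly. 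This delivers a rational block and the desired contradiction.

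The only place where infinite dimensionality bites is securing $S(\gamma')\ge 0$: because $s\gg 0$ need not be an interior point of $\mathbb B_+$ when the cone has empty interior, a norm-small perturbation of $s$ need not remain positive. I would resolve this exactly as in the proofs of Theorems \ref{thm:core-wal2} and \ref{thm:core-wal3}, by carrying out the approximation inside the principal ideal $\mathbb C=L(\mathbf e)$, where $\mathbf e$ is an order unit and the closed unit ball coincides with $[-\mathbf e,\mathbf e]$; there the strictly positive surplus $s$ dominates a whole norm-neighborhood of $0$, so that for $\lambda$ fixed and $\gamma'$ sufficiently close to $\gamma$ the perturbation $\sum_i(\gamma_i'-\gamma_i)(a_i^z-\lambda\xi_i)$ is order-dominated by $s$ and $S(\gamma')\ge 0$ follows. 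An alternative, and perhaps cleaner, route is to transport the block to the associated atomless continuum economy $\mathscr E_c$, where Lyapunov convexity makes arbitrary real participation rates $\gamma_i\in[0,1]$ admissible without any rational approximation; combined with Theorem \ref{fuzzyequivalence} this would close the argument as well. Either route completes both inclusions and hence the proof.
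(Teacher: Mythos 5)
Your proposal is correct and follows essentially the same route as the paper: type-averaging plus quasi-concavity gives the inclusion of the $\sigma$-Aubin core in the $\sigma$-Edgeworth equilibria, and for the converse one uses $e_i-\sigma(\{i\},z)c(z)\gg 0$ together with continuity and monotonicity to create strict slack in the blocking identity, so that the Aubin weights can be perturbed to rational values and realized as a coalition of an $r$-fold replica. Your explicit verification that the surplus survives the rational approximation --- carried out in the principal ideal $L(\mathbf e)$, where the order-unit norm makes the surplus dominate the perturbation --- is a step the paper's proof merely asserts, so this is a refinement of, not a departure from, the published argument.
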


\begin{proof}
The proof is analogous to the proof of Proposition 4.3 in Graziano and Romaniello (2012) and is relegated to Appendix. 
\end{proof}

\subsection{Non-dominated allocations}   
Consider a feasible allocation $(\xi_1, \ldots, \xi_n, z)$ within the finite economy $\mathscr{E}$, together with a vector of real numbers 
$\alpha = (\alpha_1, \ldots, \alpha_n)$ with $0\le \alpha_i\le 1$. We define by $\mathscr{E}(\xi_1,\cdots,\xi_n, z,\alpha)$ the auxiliary economy that coincides with $\mathscr{E}$ except for 
the initial endowment of each agent $i$, which is specified as
\[
e(z,\alpha)_i \;=\; \alpha_i e_i + (1-\alpha_i)\big[\xi_i + \rho(i,z)c(z)],
\]
while the cost function is given by $c(z)$. 

\medskip
\noindent
This specification is motivated as follows. 

\medskip
-- In the absence of public projects, Hervés-Beloso et al. (2008) consider a splitting of the initial endowment along the direction of the allocation $(\xi_1, \ldots, \xi_n)$ by assigning to each agent $i$ the bundle $\alpha_i e_i + (1-\alpha_i)\xi_i$,
resulting in a total initial endowment of
\[
\sum_{i =1}^{n}[\alpha_i e_i + (1-\alpha_i)\xi_i].
\]

--In the presence of a public project, however, the allocation direction is adjusted so as to incorporate both the private consumption $\xi_i$ and the individual contribution $\rho(i,z)c(z)$. Hence, the modified endowment of agent $i$ in the auxiliary economy is given by \footnote{We refer to Remark \ref{rem:GR} for a comparison with Graziano and Romaniello (2012).}
\[
\alpha_i e_i + (1-\alpha_i)[\xi_i + \rho(i,z)c(z)],
\]
resulting in a total initial endowment of 
\[
\sum_{i\in N}[\alpha_i e_i + (1-\alpha_i)(\xi_i + \rho(i,z)c(z))].
\]

We now present the main result of this section, which provides a characterization of cost share equilibria based on the veto power of the grand coalition in infinitely many economies.

\begin{theorem}\label{zequivalence}
Let $\mathscr E$ be a finite economy. Suppose that $\sigma$ is a contribution measure with the corresponding cost distribution function $\rho$. Let $(x_1,\cdots,x_n,y)$ be a Pareto optimal allocation of the finite economy $\mathscr E$ and let $z\in \mathscr Y$. Suppose further that $(\gamma_1^z,\cdots, \gamma_n^z, z)$ is a feasible allocation satisfying 
$u_i(\gamma_i^z,z) \ge u_i(x_i,y)$ for all $i\in N$.\footnote{Refer to Remark \ref{rem:gamma}.}
Then $(x_1,\cdots,x_n,y)$ is a cost share equilibrium if and only if it is not $z$-dominated in the economy  $\mathscr{E}(\gamma_1^z,\cdots,
\gamma_n^z, z,\alpha)$ for each $\alpha = (\alpha_1, \cdots, \alpha_n)$ with $0\le \alpha_i \le 1$.
\end{theorem}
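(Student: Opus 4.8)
The plan is to prove the two implications separately: the forward one by a short budget computation, and the converse by passing to the associated continuum economy $\mathscr{E}_c$ and invoking the equivalence theorem together with the Vind-type theorem of Section~\ref{sec:atomless}. Throughout I write ${\bf e}=\sum_{i\in N}e_i$ and use the key fact, recorded in Remark~\ref{rem:gamma}, that once $(x_1,\cdots,x_n,y)$ is a cost share equilibrium under a price system $\pi$ the reference allocation obeys $\pi(z)\cdot\gamma_i^z+\rho(i,z)\pi(z)\cdot c(z)=\pi(z)\cdot e_i$ for every $i\in N$. Weighted by $\alpha_i$ and $1-\alpha_i$, this identity says precisely that the perturbed endowment $e(z,\alpha)_i$ has the same $\pi(z)$-value as $e_i$, so that the auxiliary economy is budget-equivalent to $\mathscr{E}$ at the level $z$.

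For the forward implication, suppose $(x_1,\cdots,x_n,y)\in {\rm CE}_\rho(\mathscr{E})$ and, towards a contradiction, that it is $z$-dominated in $\mathscr{E}(\gamma_1^z,\cdots,\gamma_n^z,z,\alpha)$ by some $(\eta_1,\cdots,\eta_n,z)$. Optimality on the budget set forces $\pi(z)\cdot\eta_i+\rho(i,z)\pi(z)\cdot c(z)>\pi(z)\cdot e_i$ for each $i$; summing over $N$ and using $\sum_{i\in N}\rho(i,z)=1$ gives
\[
\pi(z)\cdot\sum_{i\in N}\eta_i+\pi(z)\cdot c(z)>\pi(z)\cdot {\bf e}.
\]
On the other hand, evaluating $\pi(z)$ on the feasibility constraint of the auxiliary economy and inserting the identity of Remark~\ref{rem:gamma} yields $\pi(z)\cdot\sum_{i\in N}\eta_i+\pi(z)\cdot c(z)=\pi(z)\cdot\sum_{i\in N}e(z,\alpha)_i=\pi(z)\cdot {\bf e}$, a contradiction. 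Hence no admissible $\alpha$ can make $(x_1,\cdots,x_n,y)$ $z$-dominated.

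For the converse I would argue by contraposition. If $(x_1,\cdots,x_n,y)$ is not a cost share equilibrium then, identifying it with the equal-treatment allocation $(f,y)$ and applying Theorem~\ref{thm:equivalence} (or its counterparts Theorems~\ref{thm:core-wal2} and~\ref{thm:core-wal3} for cones without interior), $(f,y)$ fails to belong to the $\widehat{\sigma}$-core of $\mathscr{E}_c$. By the Vind-type Theorem~\ref{thm:infintevind}, it is then blocked by a coalition $S$ of measure $\varepsilon$ chosen arbitrarily close to $1$, through an equal-treatment family of bundles $g_i$ on the pieces $S_i=S\cap T_i$. Putting $\lambda_i=\mu(S_i)/\mu(T_i)$ (each close to $1$), I set $\alpha_i=\lambda_i$ and $\eta_i=(1-\lambda_i)\gamma_i^z+\lambda_i g_i$. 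A direct computation using the feasibility of $(\gamma_1^z,\cdots,\gamma_n^z,z)$, the identity $\sum_{i\in N}(\gamma_i^z+\rho(i,z)c(z)-e_i)=0$, and the blocking equation shows that $\sum_{i\in N}\eta_i=\sum_{i\in N}\big[(1-\lambda_i)\gamma_i^z+\lambda_i g_i\big]$ matches the feasibility requirement of $\mathscr{E}(\gamma_1^z,\cdots,\gamma_n^z,z,\alpha)$; the convex-combination form is exactly what makes the resource balance close. Because $\lambda_i$ is close to $1$, each $\eta_i$ lies near $g_i$ inside the $\|\cdot\|$-open strict upper-contour set $\{g:u_i(g,z)>u_i(x_i,y)\}$, so that $u_i(\eta_i,z)>u_i(x_i,y)$ for every $i$. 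This is precisely the role of the size control in Theorem~\ref{thm:infintevind}: taking $\varepsilon$ near $1$ forces every type to participate almost fully and thereby upgrades the weak comparisons to the strict improvements that grand-coalition domination demands.

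The main obstacle is twofold. First, the blocking family $g_i$ improves at the provision level produced by the failure of core membership, whereas $z$-domination and the resource identity above are tied to the prescribed level $z$ through $\rho(\cdot,z)$ and $c(z)$; realising the block at $z$ itself is the delicate point, and I would handle it by running the argument at each provision level and exploiting the interchangeability built into the integrable-utilities hypothesis and the given reference $(\gamma_1^z,\cdots,\gamma_n^z,z)$. Equivalently, and this is the cleanest way to see the mechanism, ``not $z$-dominated for every $\alpha$'' means that the convex set $\sum_{i\in N}\{g:u_i(g,z)>u_i(x_i,y)\}$ is disjoint from the polytope swept out by the perturbed endowments as $\alpha$ ranges over $[0,1]^n$; a separation argument then produces a supporting functional $\pi(z)\in\mathbb{B}^*_+$ at the level $z$. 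The second, and genuinely subtle, step is to assemble this family of single-level supports into one price system $\pi:\mathscr{Y}\to\mathbb{B}^*_+$ under which $(x_i,y)$ is optimal on the full budget set $\mathbb{B}_i(\pi)$; here I rely on the equivalence theorems of Section~\ref{sec:atomless}, which certify that an allocation supported in this coalitional sense is a genuine cost share equilibrium.
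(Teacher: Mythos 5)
Your proposal follows the paper's proof essentially step for step: the forward direction is the same budget computation built on the Remark \ref{rem:gamma} identity $\pi(z)\cdot\gamma_i^z+\rho(i,z)\pi(z)\cdot c(z)=\pi(z)\cdot e_i$, and the converse is the same contrapositive passage through $\mathscr{E}_c$, Theorem \ref{thm:equivalence} (and its variants) and the Vind-type Theorem \ref{thm:infintevind}, followed by the convex combination $\eta_i=\lambda_i g_i+(1-\lambda_i)\gamma_i^z$; the paper takes $\alpha_i=\mu(S\cap T_i)$ where you take $\mu(S\cap T_i)/\mu(T_i)$, but the resource identity is homogeneous in the weights, so either normalization lands on the feasibility constraint of $\mathscr{E}(\gamma_1^z,\cdots,\gamma_n^z,z,\alpha)$. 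The one step to repair is the strict improvement $u_i(\eta_i,z)>u_i(x_i,y)$: justifying it by taking $\lambda_i$ ``close to $1$'' is circular (the blocking bundles $g_i$ themselves depend on the chosen $\varepsilon$), and the clean argument --- which also covers the paper's choice $\alpha_i\le 1/n$ --- is that $g_i$ lies in the open convex set $P_i^z(x_i,y)$ while $\gamma_i^z$ lies in its closure (by strong monotonicity), so any combination putting positive weight on $g_i$ stays in the open set; your explicit flagging of the mismatch between the prescribed $z$ and the project delivered by the blocking coalition is a point the paper silently glosses over.
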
 

\begin{proof}
Let $(x_1,\cdots,x_n,y)$ be a cost share equilibrium, with $\rho$ being the associated cost distribution function, and let $\pi$ be the corresponding equilibrium price system. Then it can be easily observed that the allocation $(f,y)$, defined by $(f(t), y) = (x_i, y)$ for all $t \in T_i$ and for all $i\in N$, is a cost share equilibrium of the economy $\mathscr E_c$ with individual cost distribution functions defined as $\widehat{\rho}(t,z) = n\rho(i,z)$, for $(t,z) \in T_i\times \mathscr Y$ and all $i\in N$. Suppose, for the sake of contradiction, that the allocation $(x_1, \ldots, x_n, y)$ is $z$-blocked by the grand coalition in the economy $\mathscr{E}(\gamma_1^z,\cdots,
\gamma_n^z, z,\alpha)$. Therefore, there exists some allocation $(\xi_1,\cdots,\xi_n)$ of private commodities such that 
\begin{equation}\label{eqnd1}
\sum\limits_{i\in N} \xi_i + c(z) = \sum\limits_{i \in N}  \alpha_i e_i + \sum\limits_{i \in N}  (1 -\alpha_i)(\gamma_i^z+ \rho(i,z)c(z))
\end{equation}
and $u_i(\xi_i,z) > u_i(x_i, y)$ for all $i \in N$. By Remark \ref{rem:gamma}, we have  
\[
\pi(z) \cdot \gamma_i^z + \rho(i,z) \pi(z) \cdot c(z) = \pi(z) \cdot e_i
\
\]
for all $i \in N$. Further, from $u_i(\xi_i,z) > u_i(x_i, y)$, it follows that
\[
\pi(z) \cdot \xi_i + \rho(i,z) \pi(z) \cdot c(z) > \pi(z) \cdot e_i
\]
holds for all $i \in N$. Thus,
\[
\pi(z) \cdot \xi_i + \rho(i,z) \pi(z) \cdot c(z) > \pi(z) \cdot \gamma_i^z + \rho(i,z) \pi(z) \cdot c(z)
\]
for all $i \in N$.  Therefore, for all $i \in N$,
\[
\pi(z) \cdot (1 - \alpha_i) \xi_i + \rho(i,z) (1 - \alpha_i) \pi(z) \cdot c(z) > \pi(z) \cdot (1 - \alpha_i) \gamma_i^z + \rho(i,z) (1 - \alpha_i)\pi(z) \cdot c(z),
\]
and
\[
\pi(z) \cdot \alpha_i \xi_i + \rho(i,z) \alpha_i \pi(z) \cdot c(z) > \alpha_i \pi(z) \cdot e_i.
\]
Adding these inequalities yields
\[
\pi(z) \cdot \xi_i + \rho(i,z) \pi(z) \cdot c(z) > \pi(z) [\alpha_i e_i + (1 - \alpha_i) \gamma_i^z] + \rho(i,z) (1 - \alpha_i)\pi(z) \cdot c(z)
\]
for all $i \in N$. It follows that, for all $i \in N$, 
\[
\pi(z) \cdot \alpha_i + \rho(i,z) \pi(z) \cdot c(z) > \pi(z) [\alpha_i e_i + (1 - \alpha_i)(\gamma_i^z + \rho(i,z) \pi(z) \cdot c(z))].
\]
Summing over all agents, we obtain
\[
\sum\limits_{i \in N}\pi(z) \cdot \xi_i +  \pi(z) \cdot c(z) > \pi(z)\cdot \sum\limits_{i\in N} [\alpha_i e_i + (1 - \alpha_i)(\gamma_i^z + \rho(i,z) \pi(z) \cdot c(z))],
\]
which contradicts Equation (\ref{eqnd1}).

\medskip
Conversely, assume that $(x_1,\cdots,x_n,y)$ is a non-dominated allocation in the economy $\mathscr{E}(\gamma_1^z,\cdots,
\gamma_n^z, z,\alpha)$. Let $(f,y)$ be a step function on the set $T = [0,1]$, defined by $(f(t), y) = (x_i, y)$ if $t\in T_i$ and $i\in N$. Suppose, by way of contradiction, that $(x_1,\cdots,x_n, y)$ is not a cost share equilibrium allocation with respect to a cost distribution function $\rho$. 
Then one can immediately conclude that the step allocation $(f,y)$
 is not a cost share equilibrium allocation with respect to the cost distribution function $\widehat \rho$ in the associated continuum economy $\mathscr E_c$. By 
Theorem \ref{thm:equivalence} (as well as Theorem \ref{thm:core-wal2} and Theorem \ref{thm:core-wal3}), one can claim that $(f,y) \notin 
\mathscr C^{\widehat \sigma}(\mathscr E_c)$, where $\hat{\sigma}$ is the contribution measure 
obtained from $\widehat{\rho}$.  In view of Theorem \ref{thm:infintevind}, there exists a coalition $S \subseteq T$ with $\mu(S) > 1 - \frac{1}{n}$ $\widehat \sigma$-blocking 
the allocation $(f,y)$ via an allocation $(g,z)$. Therefore, we have $u_t(g(t),z) > u_t(f(t), y)$ $\mu$-a.e. on $S$ and 
\[
\int_{S} g\, d\mu + \widehat{\sigma}(S,z)\widehat{c}(z) = \int_{S} e\, d\mu. 
\]
Define 
\[
g_i := \frac{1}{\mu(S_i)}\int_{S_i}g d\mu,
\] 
where $S_i = S \cap T_i$ for $i\in N$. Then the above equation can be written as
\[
\sum\limits_{i\in N} \mu(S_i)g_i + \sum\limits_{i\in N} \sigma(\{i\},z)\frac{\mu(S_i)}{\mu(T_i)}\frac{c(z)}{n} = \sum\limits_{i\in N} \mu(S_i)e_i.
\]
Define $\alpha _i := \mu(S_i)$ for all $i \in N$. Then
\[
\sum\limits_{i\in N} \alpha_i g_i + \sum\limits_{i\in N} \alpha_i \sigma(\{i\},z)c(z) = \sum\limits_{i\in N} \alpha_i e_i .
\]
Adding $\sum_{i =1}^{n}(1-\alpha_i)\sigma(\{i\},z)c(z)$ on both sides of the above equation we get
\[
\sum\limits_{i\in N} \alpha_i g_i + c(z) = \sum\limits_{i\in N} [\alpha_i e_i + (1-\alpha_i)\sigma(\{i\},z)c(z)].
\]
 Define $h_i = \alpha_i g_i + (1-\alpha_i)\gamma_i^z$. Then in view of the above equation, it follows that
\[
\sum\limits_{i\in N} h_i + c(z) = \sum\limits_{i \in N}  \alpha_i e_i + \sum\limits_{i \in N}  (1 -\alpha_i)[\gamma_i^z + \sigma(\{i\},z)c(z)].
\]
Since $u_i(h_i,z) > u_i(x_i,y)$ for all $i \in N$, we conclude that $(x_1,\cdots,x_n,y)$ is $z$-dominated in $\mathscr{E}(\gamma_1^z,\cdots,
\gamma_n^z, z,\alpha)$. This is a contradiction. 
\end{proof}

\begin{remark}\label{rem:GR}
Our construction of endowments in the auxiliary economy differs from the approach adopted by Graziano and Romaniello (2012). In their formulation, each agent in the finite economy initially pays for public good provision according to a uniform cost share, and then the initial endowments are adjusted in the direction of $(\xi_1,\cdots, \xi_n)$.\footnote{By contrast, our construction allows for the possibility of non-uniform cost shares, thereby generalizing the framework in Graziano and Romaniello (2012).} The total market endowment in their model is given by
\[
\sum_{i \in N} \alpha_i \left(e_i - \frac{1}{n}c(z)\right) + (1-\alpha_i)\xi_i,
\]
which motivates their cost function
\[
c_{\alpha}(z) = \sum_{i \in N} \alpha_i \frac{c(z)}{n},
\]
and the initial endowment allocation
\[
\alpha_i e_i + (1-\alpha_i)\xi_i
\]
for each \(i \in N\). Using the aggregate smoothness condition (see page 291 in Graziano and Romaniello (2012)), together with our assumptions, they characterize cost share equilibria of \(\mathscr{E}\) in terms of \(z\)-domination as in Theorem \ref{zequivalence}. In contrast, our approach modifies only the initial endowment allocation rather than the cost function itself, resulting in a different initial endowment structure from that proposed in Graziano and Romaniello (2012).
\end{remark}

The following theorem summarizes our major findings for a finite economy with public projects where the space of private commodities is a Banach lattice. 

\begin{theorem}\label{thm4}
Let $\mathscr E$ be a finite economy. Suppose that $\sigma$ is a contribution measure with the corresponding cost distribution function $\rho$. Let $(x_1,\cdots,x_n,y)$ be a Pareto optimal allocation of the finite economy $\mathscr E$ and let $z\in \mathscr Y$. Suppose that $(\gamma_1^z,\cdots, \gamma_n^z, z)$ is a feasible allocation satisfying 
$u_i(\gamma_i^z,z) \ge u_i(x_i,y)$ for all $i\in N$. Further, assume that $e_{i} - \sigma(\{i\},z)c(z) \gg 0$ holds  for each agent $i \in N$, and $z \in \mathscr{Y}$.
 Then the following statements are equivalent:

\begin{itemize}

\item[\emph{(i)}] $(x_1, \cdots,x_n, y)$ is a cost share equilibrium allocation.

\item[\emph{(ii)}] $(x_1, \cdots,x_n, y)$ is in the $\sigma$-Aubin core of the economy.

\item[\emph{(iii)}] $(x_1, \cdots,x_n, y)$ is an Aubin non-dominated allocation.

\item[\emph{(iv)}] $(x_1,\cdots,x_n, y)$ is a $\sigma$-Edgeworth equilibria.

\item[\emph{(v)}] $(x_1, \cdots,x_n, y)$ is not $z$-dominated in the economy  $\mathscr{E}(\gamma_1^z,\cdots,
\gamma_n^z, z,\alpha)$ for each $\alpha = (\alpha_1, \cdots, \alpha_n)$ with $0\le \alpha_i \le 1$.

\end{itemize}

\end{theorem}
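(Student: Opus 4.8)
The plan is to treat this theorem as an assembly of the three equivalences already established, splicing in condition (iii) with a single substantive argument. First I would record the three backbone biconditionals. The equivalence (i) $\Leftrightarrow$ (ii) is exactly Theorem \ref{fuzzyequivalence} ($\mathscr{C}^A_{\sigma}(\mathscr E)=CE_\rho(\mathscr E)$). The equivalence (ii) $\Leftrightarrow$ (iv) is Proposition \ref{eeequivalence}; this is precisely where the hypothesis $e_i-\sigma(\{i\},z)c(z)\gg 0$ is consumed. Finally (i) $\Leftrightarrow$ (v) is Theorem \ref{zequivalence}, applied with the Pareto optimal allocation $(x_1,\cdots,x_n,y)$ and the companion allocation $(\gamma_1^z,\cdots,\gamma_n^z,z)$ furnished in the hypotheses (via Remark \ref{rem:gamma}). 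After this step, (i), (ii), (iv), (v) are mutually equivalent, and only (iii) remains to be woven in.

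Next I would fix the reading of (iii): an allocation is \emph{Aubin non-dominated} exactly when no \emph{full-support} Aubin coalition $\gamma$ (one with $\gamma_i>0$ for every $i\in N$) blocks it, blocking being measured through the extended contribution $\widetilde\sigma(\gamma,z)=\sum_{i\in N}\gamma_i\rho(i,z)$. Under this reading the implication (ii) $\Rightarrow$ (iii) is immediate: the Aubin $\sigma$-core forbids blocking by \emph{every} Aubin coalition, in particular by the full-support ones, so any allocation in $\mathscr{C}^A_{\sigma}(\mathscr E)$ is a fortiori Aubin non-dominated.

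The one direction carrying real content is (iii) $\Rightarrow$ (i), which I would prove in contrapositive form: if $(x_1,\cdots,x_n,y)$ is not a cost share equilibrium, then it is blocked by some full-support Aubin coalition. Identifying the allocation with the step function $(f,y)$, $f(t)=x_i$ on $T_i$, it fails to be a cost share equilibrium of $\mathscr E_c$; hence by the contrapositive of Theorem \ref{thm:equivalence} (and, in the remaining cone cases, Theorems \ref{thm:core-wal2} and \ref{thm:core-wal3}) we have $(f,y)\notin \mathscr{C}^{\widehat\sigma}(\mathscr E_c)$. I would then invoke Vind's theorem (Theorem \ref{thm:infintevind}) with a size $\varepsilon\in(1-\tfrac1n,1)$ to obtain a blocking coalition $S$ with $\mu(S)=\varepsilon$. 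The arithmetic $\mu(S_i)\ge \mu(S)-\mu(T\setminus T_i)=\varepsilon-(1-\tfrac1n)>0$ forces $S$ to meet every type $T_i$ in positive measure. Averaging the continuum blocking bundle, $g_i:=\mu(S_i)^{-1}\int_{S_i}g\,d\mu$, and setting $\alpha_i:=\mu(S_i)>0$, the feasibility identity of the block rewrites (after unwinding $\widehat\sigma$ and $\widehat c=c/n$) as $\sum_{i\in N}\alpha_i g_i+\widetilde\sigma(\alpha,z)c(z)=\sum_{i\in N}\alpha_i e_i$, exhibiting $\alpha$ as a full-support Aubin coalition blocking $(x_1,\cdots,x_n,y)$ via $(g_1,\cdots,g_n,z)$. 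This is precisely the negation of (iii), so we obtain the cycle (iii) $\Rightarrow$ (i) $\Rightarrow$ (ii) $\Rightarrow$ (iii), and all five statements coincide.

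The main obstacle is the passage from an arbitrary block to a \emph{full-support} one: a bare failure of the Aubin core yields only blocking by some Aubin coalition whose support may be a proper subset of $N$, and it is exactly Vind's theorem in the continuum model that lets me enlarge the blocking set until it is forced to intersect every type. Two technical points must be watched: (a) the bookkeeping between the continuum contribution measure $\widehat\sigma$ with cost $\widehat c=c/n$ and the finite extended contribution $\widetilde\sigma$ with cost $c$, which must reconcile so that the choice $\alpha_i=\mu(S_i)$ produces the correct cost share $\widetilde\sigma(\alpha,z)$; and (b) verifying that the averaged bundle $g_i$ still strictly improves on $(x_i,y)$, which follows from convexity of the strict upper-contour set of $u_i(\cdot,z)$, exactly as in the averaging steps of Theorems \ref{thm:infintevind} and \ref{zequivalence}.
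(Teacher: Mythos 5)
Your proposal is correct and follows essentially the same route as the paper: the backbone equivalences (i)$\Leftrightarrow$(ii), (ii)$\Leftrightarrow$(iv), (i)$\Leftrightarrow$(v) are cited from Theorem \ref{fuzzyequivalence}, Proposition \ref{eeequivalence} and Theorem \ref{zequivalence}, (ii)$\Rightarrow$(iii) is definitional, and the remaining implication is obtained by passing to the continuum economy and invoking Theorem \ref{thm:infintevind} with $\mu(S)>1-\tfrac1n$ to force a full-support blocking Aubin coalition. The only (immaterial) difference is that you close the cycle via (iii)$\Rightarrow$(i) while the paper argues (iii)$\Rightarrow$(ii); given (i)$\Leftrightarrow$(ii) these are interchangeable and pass through the identical intermediate steps.
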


\begin{proof}
The fact that (i) $\Longleftrightarrow$ (ii) follows from Theorem \ref{fuzzyequivalence} whereas (ii) $\Longleftrightarrow$ (iv) and (i) 
$\Longleftrightarrow$ (v) follows from Proposition \ref{eeequivalence} and Theorem \ref{zequivalence}, respectively. The fact that (ii) $\implies$ (iii) follows from the definition. 
To establish that the above statements are equivalent, all that remains to show is (iii) $\implies$ (ii). Let $(x_1, \cdots,x_n, y)$ be not in the $\sigma$-Aubin core of the economy. Then the corresponding allocation $(f,y)$ does not belong to the $\widehat{\sigma}$-core of the associated continuum economy $\mathscr E_C$, where 
\[
\widehat{\sigma}(S,z) = \sum\limits_{i =1}^{n} \sigma(\{i\},z)\frac{\mu(S \cap T_i)}{\mu(T_i)}.
\] 
By Theorem \ref{thm:infintevind}, there exists a coalition $S$ that blocks the allocation $(f,l)$ such that $\mu(S) > 1 - \frac{1}{n}$ via an allocation $(g,z)$. Then there exists a 
coalition $\gamma \in \mathcal A_{c}$, a public project $z \in \mathscr Y$, and a private good assignment $(\xi_1,\cdots,\xi_n)$ such that
\begin{equation}\label{eqnf3}
\sum\limits_{i\in N} \gamma_i \xi_i + \widetilde{\sigma}(\gamma,z)c(z) = \sum\limits_{i\in N}\gamma_i e_i
\end{equation}
and 
\[
u_i(\xi_i, z) > u_i(x_i, y), \mbox{ for all } i \in {\rm supp}\gamma.
\]
where $\gamma_i := \mu(S \cap T_i)> 0$ and 
\[
\xi_i=\frac{1}{\gamma_i}\int_{S\cap T_i} g\, d\mu,
\]
for all $i\in N$. Hence, $(x_1, \cdots, x_n, y)$ is not a Aubin non-dominated allocation in the finite economy $\mathscr E$. This establishes the claim.
\end{proof}

\section{Appendix}\label{appendix}

{\bf Proof of Theorem \ref{fuzzyequivalence}:} We first show that any allocation $(x_1,\cdots,x_n,y) \in CE_{\rho}(\mathscr E)$ belongs to 
the Aubin $\sigma$-core of the economy. Suppose not, then there exists a 
coalition $\gamma \in \mathcal A_{c}$, a public project $z \in \mathscr Y$, and an allocation of private goods $(\xi_1,\cdots,\xi_n)$ such that
\begin{equation}\label{eqnf1}
\sum\limits_{i\in N} \gamma_i \xi_i + \widetilde{\sigma}(\gamma,z)c(z) = \sum\limits_{i\in N}\gamma_i e_i
\end{equation}
and 
\[
u_i(\xi_i, z) > u_i(x_i, y), \mbox{ for all } i \in {\rm supp}\gamma.
\]
Further, let $\pi$ denote a price system associated with the cost share equilibrium allocation $(x_1,\cdots,x_n,y)$. Then, from the definition of cost share equilibrium 
it follows that, for all $i \in {\rm supp}\gamma$,
\[
\pi(z) \cdot \xi_i +  \rho(i,z) \pi(z) \cdot c(z) > \pi(z) \cdot e_i.
\]
Consequently,
\[
\pi(z) \cdot \gamma_i \xi_i + \gamma_i  \rho(i,z) \pi(z) \cdot c(z) > \pi(z) \cdot \gamma_i e_i
\]
for each $i\in {\rm supp}\gamma$. Since $\widetilde{\sigma}(\gamma,z) = \sum\limits_{i\in N}\gamma_i \rho(i, z)$, it follows that
\[
\pi(z) \cdot \sum\limits_{i\in N} \gamma_i \xi_i +  \widetilde{\sigma}(\gamma,z) \pi(z) \cdot c(z) > \pi(z) \cdot \sum\limits_{i\in N} \gamma_i e_i .
\]
This is a contradiction to Equation (\ref{eqnf1}) and thus our claim follows.

\medskip
\noindent
We now establish the converse and show that an allocation $(x_1,\cdots,x_n, y)$ in the Aubin $\sigma$-core of the economy $\mathscr E$ is a cost share equilibrium 
of the economy $\mathscr E$. To begin with, consider the associated allocation $(f,y)$ in the continuum economy $\mathscr E_c$ defined by $f(t) = x_i$, for all $t \in T_i$
and all $i\in N$. 

 \medskip
\noindent
{\bf Claim I: $\mathbf{(f,y)}$ belongs to the $\mathbf{\hat{\sigma}}$-core of  $\mathbf{\mathscr E_c}$.} Suppose not, then there would exist a coalition
$S \in \mathscr T$ with $\mu(S) > 0$, and an allocation $(g,z)$ such that
\[
\int_{S} g d\mu + \hat{\sigma}(S,z)\hat{c}(z) = \int_{S} e d\mu,
\]
and
\[
u_t(g(t),z) > u_t(f(t),y), \mbox{ $\mu$-a.e. on } S.
\]
In view of the above inequality, one obtains that
$u_i(g(t), z) > u_i(x_i, y)$ $\mu$-a.e. on $S \cap T_i$ and all $i\in N$.
Let $\mathbb J = \{ i \in N: \mu(S \cap T_i) > 0\}$. Define
\[
\gamma_i = \left\{
\begin{array}{ll}
\mu(S \cap T_i), \mbox{ if } i \in \mathbb J, \\
0, \hspace{42pt} \mbox{ otherwise.}
\end{array}
\right.
\]
 For $i \in \mathbb J$, define 
\[
g_i: = \frac{1}{\gamma_i}\int_{S \cap T_i}g\, d\mu
\] 
for all $i\in \widetilde{N}$. Thus, it follows that
\[
\sum\limits_{i\in \mathbb J} \gamma_i g_i = \int_{S} g\, d\mu = \int_{S} e\, d\mu - \hat{\sigma}(S,z)\hat{c}(z) = \sum\limits_{i \in \mathbb J}\int_{S \cap T_i} e\, d\mu - 
\sum\limits_{i \in \mathbb J}\sigma(\{i\}, z)\frac{\mu(S \cap T_i)}{\mu(T_i)}\frac{c(z)}{n}
\]
It can be noted that the above equation simplifies to
\[
\sum\limits_{i \in N} \gamma_i g_i = \sum\limits_{i\in N} \gamma_i e_i - \widetilde{\sigma}(\gamma,z)c(z).
\]
By virtue of Jensen's integral inequality, one can argue that $u_i(g_i, z) > u_i(x_i, y)$ for all $i \in \mathbb J$, which yields a contradiction.

\medskip
By {\bf Claim I} and Theorem 
\ref{thm:equivalence}, $(f,y)$ is a cost share equilibrium with respect to the individual cost function $\widehat{\rho}$ associated to $\rho$ and defined by 
$\hat{\rho}(t, z) = n\rho(i,z)$, for all $(t,z) \in T_i\times \mathscr Y$ and each $i\in N$. Let $\pi: \mathscr Y \to \mathbb B^*$ be the price system associated to 
the cost share equilibrium allocation $(f,y)$. 

\medskip
\noindent
{\bf Claim 2: $\mathbf{(x_1,\cdots,x_n, y)}$ is a cost share equilibrium for the price system $\mathbf{\pi}$.} For each $i \in N$ and $t \in T_i$, we obatin
\[
\pi(y) \cdot x_i +  \rho(i,y) \pi(y)\cdot c(y) = \pi(y) \cdot f(t) +  \hat{\rho}(t,y) \pi(z) \cdot \hat{c}(y) \le \pi(y) \cdot e(t).
\]
Pick an $i \in N$, $z \in \mathscr Y$, and $g \in \mathbb B_+$ such that $u_i(\xi, z) > u_i(x_i, y)$ holds. As $(f,y)$ is a cost share equilibrium, the following holds for almost all $t \in T_i$, 
\[
\pi(z) \cdot \xi + \hat{\rho}(t,z) \pi(z)\cdot \hat{c}(z) > \pi(z) \cdot e(t).
\]
Hence, it follows that
\[
\pi(z) \cdot \xi + \rho(i,z) \pi(z) \cdot c(z) > \pi(z) \cdot e_i.
\]
This establishes our claim, and the conclusion follows.

\medskip
\noindent
{\bf Proof of Theorem \ref{eeequivalence}:}   Let the allocation $(x_{1},\cdots,x_{n},y)$ belong to the $\sigma$-Aubin core  of $\mathscr E$ but is not a $\sigma$-Edgeworth equilibrium. So, there exists an $r$-replica 
$\mathscr{E}_{r}$, a coalition $C$ of $\mathscr{E}_{r}$, and an allocation $ (\xi_{(1,1)},\cdots,\xi_{(1,r)},\cdots,\xi_{(n,1)},\cdots,\xi_{(n,r)}, z)$ such that 

\[
u_{(i,j)}(\xi_{(i,j)},z)=u_{i}(\xi_{(i,j)},z)>u_{i}(x_{i},y)
\]
for all $(i,j) \in C$ and
  \[
        \sum\limits_{(i,j) \in C} \xi_{(i,j)} + \sum\limits_{(i,j) \in C} \sigma(\{(i,j)\},z)c_{r}(z) = \sum\limits_{(i,j) \in C} e_{(i,j)}. 
   \]
   Let $C_i:=\{j:(i,j)\in C\}$ and let $l_{i}$ denote the cardinality $C_i$. Define $A:=\{i \in N: l_{i} \neq 0\}$. For each $i \in A,$ define 
\[
\xi_{i}=\sum_{j\in C_i} \frac{1}{l_{i}}\xi_{(i,j)}.
\] 
In view of the above inequality, we obtain
  \[
 \sum\limits_{i \in A} l_{i}\xi_{i} + \sum\limits_{i \in A} l_{i}\frac{\sigma(\{i\},z)}{r}rc(z) = \sum\limits_{i \in A} l_{i}e_{i}.
 \]
 Considering the Aubin coalition $\gamma$ defined by 
   \[
   \gamma_i:= \left\{
  \begin{array}{ll}
   \frac{l_{i}}{\sum_{i\in A} l_{i}},& \mbox{if $i\in A$;}\\[0.5em]
   0,& \mbox{otherwise,}
  \end{array}
  \right.
  \]
 one can arrive at a contradiction following from the convexity of the utility functions.

    Conversely, let $(x_{1},\cdots,x_{n},y)$ be a $\sigma$-Edgeworth equilibrium, and assume, on the contrary, that there exists an Aubin coalition 
    $\gamma$ and an allocation $(\xi_{1},\cdots,\xi_{n},z)$ such that $u_{i}(\xi_{i},z)>u_{i}(x_{i},y)$ for all $i \in {\rm supp}\gamma$ and
    \[
        \sum\limits_{i \in {\rm supp}\gamma} \gamma_{i}\xi_{i} + \widetilde{\sigma}(\gamma,z)c(z) = \sum\limits_{i \in {\rm supp}\gamma} \gamma_{i}e_{i}. 
   \]
Choose some $\varepsilon \in (0,1)$ such that $u_{i}(\varepsilon \xi_{i},z) > u_{i}(x_{i},y),$ for each $i\in N$. Thus, the last inequality can be rewritten in the form
\[
 \sum\limits_{i \in {\rm supp}\gamma} \frac{\gamma_{i}}{\varepsilon} [ \varepsilon \xi_{i} + (1-\varepsilon)(e_{i} - \sigma(\{i\},z)c(z)) ] + \sum\limits_{i \in {\rm supp}\gamma} \frac{\gamma_{i}}{\varepsilon}  \sigma(\{i\},z)c(z) = \sum\limits_{i \in {\rm supp}\gamma} \frac{\gamma_{i}}{\varepsilon}e_{i}.
    \]
    By the monotonicity assumption, one can infer that 
    \[
    u_{i}(\varepsilon \xi_{i} + (1-\varepsilon)(e_{i} - \sigma(\{i\},z)c(z)),z) > u_{i}(x_{i},y),
    \] 
    where $\varepsilon \xi_{i} + (1-\varepsilon)(e_{i} - \sigma(\{i\},z)c(z))$ is a strictly positive vector of private goods. Hence, without loss of generality, one can assume that $\xi_{i} \gg 0$ for each $i\ \in N$ and therefore, by continuity, one obtains
 \[
 \sum\limits_{i \in {\rm supp}\gamma} \gamma_{i}\xi_{i} + \sum\limits_{i \in {\rm supp}\gamma}  \gamma_{i}\sigma(\{i\},z)c(z) \ll \sum\limits_{i \in {\rm supp}\gamma} \gamma_{i}e_{i}.  
    \]
The last inequality ensures that it is possible to replace the Aubin coalition $\gamma$ by a rational-valued coalition $\gamma^{\prime}$ in such a way that the above inequality still holds.

\medskip
    Let $r$ be a positive integer such that $l_{i}=r\gamma_{i}^{\prime}$ is  a positive integer, for every $i \in  {\rm supp}\gamma$. Define the coalition $S$ in $r$-fold replica $\mathscr{E}_{r}$ of $\mathcal{E}$ as the coalition containing agents $(i,j)$, $j=1,...,l_{i}$, and for $i \in$ supp$\gamma.$ Define $\xi_{(i,j)}:= \xi_{i},$ for $j=1,\cdots,l_{i}$, for each $i \in {\rm supp}\gamma$. It follows from the previous inequality that
\[
 \sum\limits_{i \in {\rm supp}\gamma} l_{i} \xi_{(i, j)}+\sum\limits_{i \in  {\rm supp}\gamma} l_{i} \sigma(\{(i, j)\},z) c_{r}(z) \ll \sum\limits_{i \in 
 {\rm supp}\gamma} l_{i} e_{(i, j)}.
 \]
 This is a contradiction to the fact that the allocation $(x_{1},\cdots, x_{n},y)$ belongs to the $\sigma$-core of the economy $\mathscr{E}_{r}.$

\bigskip
\noindent
{\bf Acknowledgement:} The author would like to thank the Core Research Grant (Grant No. CRG/2023/009012) issued to him under the SERB scheme of Anusandhan National Research Foundation. The funding support has helped him enormously with the research outcome.

\end{document}